\documentclass[a4paper,UKenglish]{lipics-v2021}
\pdfoutput=1
\usepackage[ruled]{algorithm2e}
\usepackage{tikz}
\definecolor{DCnegative}{HTML}{355C7D}
\definecolor{DCpositive}{HTML}{A23B55}
\usepackage{needspace}
\usepackage{placeins}
\usetikzlibrary{arrows.meta,backgrounds,shapes.geometric}
\newcommand{\DC}{\textnormal{\textsc{DoubleCoverage}}}
\newcommand{\ALG}{\mathrm{ALG}}
\newcommand{\OPT}{\mathrm{OPT}}
\newcommand{\dist}{\operatorname{dist}}
\newcommand{\lca}{\operatorname{lca}}

\title{A Potential Function Analysis of Double Coverage
for the \texorpdfstring{\(k\)}{k}-Taxi Problem on Trees}
\titlerunning{A Potential Function Analysis of Double Coverage}

\nolinenumbers
\author{Ali Karim Lalani}
{Department of Computer Science, University of Toronto, Toronto, Canada}
{ali@cs.toronto.edu}
{https://orcid.org/0009-0003-4430-2449}
{}
\authorrunning{A. K. Lalani}
\Copyright{Ali Karim Lalani}
\ccsdesc[500]{Theory of computation~Online algorithms}
\keywords{Online algorithms, DoubleCoverage, k-server, hard k-taxi, tree metrics, potential functions}
\hideLIPIcs

\begin{document}
\maketitle
\begin{abstract}
We provide potential function proofs of the competitive ratios of \DC{}
for the hard \(k\)-taxi problem on HSTs and general weighted trees of
bounded depth. Buchbinder, Coester, and Naor (2023) obtained these
ratios using time-reverse dual fitting and noted that they did not know
a pure potential proof beyond \(k=2\).

We observe that the minimum matching distance between the online and
offline taxi configurations can be expressed as an integral over the
rooted tree of the absolute difference between their taxi counts below
each point. This representation lets us analyse \DC{} over short movement
intervals and apply the resulting estimates to potentials
that remain unchanged when both serving taxis are relocated
from pickup to destination.
\end{abstract}

\section{Introduction}
\label{sec:intro}

The \(k\)-server problem asks an online algorithm to serve a sequence of
requests in a metric space using \(k\) mobile servers. Each requested point
must be occupied by a server, and the algorithm pays the total distance moved.

Until very recently, the \(k\)-server conjecture asked whether a deterministic
competitive ratio of \(k\) is achievable on every metric space. Coester,
Koutsoupias, and Zbysiński have announced a proof of the conjecture,
showing that the Work Function Algorithm is \(k\)-competitive on every
metric space~\cite{CKZ26}. The tree case had been known much earlier:
Chrobak, Karloff, Payne, and Vishwanathan gave a \(k\)-competitive algorithm
on the line~\cite{CKPV91}, and Chrobak and Larmore extended the result to
tree metrics~\cite{CL91}.

On trees, \DC{} moves every unobstructed server towards the current
request at the same speed. For the \(k\)-server problem, the analysis of \DC{} in~\cite{CL91}
uses a potential function equal to the sum of pairwise distances
between online servers plus \(k\) times the minimum matching distance
between the online and offline configurations.

Bansal et al.~\cite{BEJKP18} analysed \DC{} on trees
in the more general $(h,k)$-server setting, where
an online algorithm with $k$ servers is compared
with an offline optimum with $h\le k$ servers.

In the \(k\)-taxi problem, introduced by Fiat, Rabani, and Ravid~\cite{FRR90},
a request is a pair
\((r,s)\), representing a passenger who must be picked up at \(r\) and
transported to \(s\). In the \emph{easy} \(k\)-taxi problem, all movement is
charged, while in the \emph{hard} \(k\)-taxi problem, only the empty movement
to the pickup is charged. Coester and Koutsoupias~\cite{CK19} showed that the
easy \(k\)-taxi problem has the same deterministic and randomized competitive
ratios as the \(k\)-server problem. In contrast, the hard \(k\)-taxi problem
has an exponential deterministic lower bound of \(\Omega(2^k)\) on general
metrics~\cite{CK19}.

A free relocation can increase the sum of the pairwise distances between
online servers, preventing the direct use of this \(k\)-server potential.
Buchbinder, Coester, and Naor~\cite{BCN23} proved that \DC{} is competitive
for hard \(k\)-taxi on trees. On a hierarchically well-separated tree (HST) of depth \(d\), they
obtained the competitive ratio
\(c_{k,d}=\sum_{h=1}^{\min\{k,d\}}\binom{k}{h}\),
which equals \(2^k-1\) when \(d\ge k\). On general weighted trees of
depth \(d\), they obtained an \(O(k^d)\) upper bound and proved a matching
\(\Omega(k^d)\) lower bound even on unweighted trees. The dependence on
\(k\) is therefore tight asymptotically for every fixed \(d\).
Their analysis constructs a dual solution backwards in time, and they
noted that they did not know a pure potential proof of the competitiveness
of \DC{} beyond \(k=2\)~\cite[Appendix~A.2]{BCN23}. The simplicity and elegance of the algorithm make it natural to seek an
analysis using standard potential function techniques.

\paragraph*{Our Contributions.}
We give potential function proofs of the competitive ratios obtained
in~\cite{BCN23} for \DC{} in the hard \(k\)-taxi problem on HSTs and
general weighted trees. Let \(X\) and \(Y\) be the multisets of online
and offline taxi locations, and let \(M(X,Y)\) be their minimum
matching distance. For a point \(z\) in the interior of an edge of
a rooted tree \(T\), let \(T_z\) be the component below \(z\).
We call \(D(z)=|X\cap T_z|-|Y\cap T_z|\) the
\emph{signed cut difference} at \(z\). Our analysis uses the identity
\(M(X,Y)=\int_T |D(z)|\,dz\)~\cite{EM12}, where integration is over the tree.

We analyse \DC{} over \emph{elementary steps}, similar to the
elementary moves in~\cite[Section~4]{BEJKP18}. Throughout the interior of each step, the set of moving taxis
remains unchanged, each mover stays in the interior of one
edge, and no mover passes an offline taxi. The integral representation
lets us calculate the change in \(M\) by adding the contributions
from the disjoint segments traversed by the movers. On a step of
duration \(\delta\), with the offline taxis fixed and one at the
pickup, the signs of \(D(z)\) force at least one mover to contribute
\(-\delta\), while every other mover contributes at most \(\delta\).
Writing \(a\) for the number of movers in the step, we obtain
\(\Delta M\le(a-2)\delta\), the estimate already implied by the analysis in~\cite{BEJKP18}.

On HSTs, we use this estimate alongside the rank potential
from~\cite[Section~4.3]{BCN23}. For general weighted trees, we
integrate the cut differences with coefficients
from~\cite[Section~5.1]{BCN23} that depend on edge depth and the
sign of the difference. The same sign argument shows that this potential
decreases by at least the online movement cost.
Relocating both serving taxis from pickup to
destination leaves every cut difference, and hence
the potential, unchanged. We also give an alternative proof
of the \(k\)-competitiveness of \DC{} for \(k\)-server on weighted
trees, closely following the \(h=k\) case
of~\cite[Section~4]{BEJKP18}.

\section{Preliminaries}
\label{sec:prelim}
\subsection{The k-Server and Hard k-Taxi Problems}
Let $(V,\dist)$ be a metric space. An online algorithm controls $k$ mobile servers and receives requests one at a time, whereas an offline optimum knows the complete request sequence. We write their configurations as $X=\{x_1,\ldots,x_k\}$ and
$Y=\{y_1,\ldots,y_k\}$, respectively, allowing several servers at the
same location. All server counts include multiplicity. In the $k$-server problem, a request at $r\in V$ must be served by placing a server at $r$. The cost is the total distance moved by all servers.

In the hard $k$-taxi problem, a request $(r,s)\in V\times V$
requires a taxi to reach the pickup point $r$ and then transport
the passenger to the destination $s$. Only movement without a
passenger is charged. We view the request as a charged pickup at $r$, followed by the free
relocation of the serving taxi from $r$ to $s$. During the charged phase, an algorithm may move
several taxis and pays their total movement.

For a finite request sequence $\sigma$, let $\ALG(\sigma)$ denote
the total charged cost incurred by the online algorithm, and let
$\OPT(\sigma)$ denote the minimum total charged cost of an offline
algorithm that knows $\sigma$ in advance. We say an online algorithm is
$\mu$-competitive if there exists a constant $C$ such that
\begin{equation}\label{eq:competitive}
    \ALG(\sigma)\le \mu\,\OPT(\sigma)+C
\end{equation}
for every finite request sequence $\sigma$, where $C$ may depend
on the metric and the initial configurations, but is independent
of $\sigma$. 

\raggedbottom
\subsection{Rooted Weighted Trees}
Let $T=(V,E)$ be a finite rooted tree with root $\rho$ and positive edge lengths. We also use $T$ for its continuous geometric realisation: each weighted edge is an interval of its specified length. Servers may occupy any point of the geometric tree, including
edge interiors. In our applications to general weighted trees, requests are
vertices and need not be leaves. Movement towards the root is
called upward, and movement away from the root is called downward. The distance $\dist(x,y)$ is the length of the unique path $[x,y]$.

For $z$ in the interior of an edge, deleting $z$ leaves two components. Let $T_z$ be the component not containing $\rho$, called the \emph{descendant side} of the cut at $z$. A point $x$ lies \emph{below} $z$ if $x\in T_z$. The lowest common ancestor $\lca(x,y)$ is the point of $[\rho,x]\cap[\rho,y]$ farthest from $\rho$, also when $x$ or $y$ is in an edge interior. The combinatorial depth of a vertex is the number
of original edges on its path from $\rho$.
The depth of $T$ is the maximum combinatorial
depth of any vertex.

\subsection{DoubleCoverage}
An online server at \(x\) is \emph{unobstructed} if no other online server
lies strictly between \(x\) and the request \(r\) on the unique path
\([x,r]\), and is \emph{obstructed} otherwise. If several servers occupy
the same point and no server lies strictly between that point and \(r\),
we designate one of them as unobstructed and regard the others as
obstructed.

\DC{} moves every currently unobstructed server toward the
request at unit speed. When a moving server reaches a vertex, it continues
along the unique path toward \(r\). The set of moving servers is updated whenever
the obstruction relation changes. The movement ends when a server reaches
\(r\). In the hard \(k\)-taxi problem, the serving taxi is then relocated
from \(r\) to \(s\) at no cost. Algorithm~\ref{alg:dc} describes both
settings.

\SetAlgoSkip{bigskip}
\begin{algorithm}[H]
\DontPrintSemicolon
\ForEach{request \(r\), or \((r,s)\) in the hard \(k\)-taxi setting}{
    \While{no online server is at \(r\)}{
        move every currently unobstructed online server toward \(r\) at unit
        speed until a server reaches \(r\) or the obstruction relation changes\;
    }
    \If{the request is \((r,s)\)}{
        relocate the serving taxi from \(r\) to \(s\) at zero cost\;
    }
}
\caption{\DC{} on a tree}
\label{alg:dc}
\end{algorithm}

\subsection{Potential Functions}
\label{subsec:potential-analysis}

Our notation
is adapted from~\cite{CP26}. For a finite sequence of $N$ requests, fix an optimal
offline solution. Let $\ALG_t$ and $\OPT_t$ denote
the costs incurred by the online and offline algorithms,
respectively, in serving request $t$. Thus
$\ALG=\sum_{t=1}^N\ALG_t$ and
$\OPT=\sum_{t=1}^N\OPT_t$.

Each analysis uses a nonnegative potential $\Phi(X,Y)$
whose initial value depends only on the tree and the
initial configurations. Let $\Phi_t$ be its value after
both algorithms have served request $t$, including the
free relocation in the taxi problem, and let $\Phi_0$
be its initial value. Write
$\Delta\Phi_t:=\Phi_t-\Phi_{t-1}$.
To prove $\mu$-competitiveness, it suffices to show that
\begin{equation}\label{eq:potential-criterion}
\ALG_t+\Delta\Phi_t\le\mu\OPT_t
\qquad\text{for every }t=1,\ldots,N.
\end{equation}
Indeed, summing these inequalities gives
\[
\begin{aligned}
\ALG+\Phi_N-\Phi_0
&=\sum_{t=1}^N\bigl(\ALG_t+\Delta\Phi_t\bigr)
\le\mu\sum_{t=1}^N\OPT_t
=\mu\OPT.
\end{aligned}
\]
Since $\Phi_N\ge0$, we obtain
$\ALG\le\mu\OPT+\Phi_0$, with an additive constant
independent of the request sequence.

For each request, we analyse offline movement first
and online movement second, holding the other
configuration fixed in each phase. In the taxi problem,
both serving taxis are then relocated from the pickup
to the destination at no cost. This ordering is used
only for the analysis and preserves both algorithms'
trajectories and costs.
We write $\Delta\Phi_t^{\mathrm{off}}$ and
$\Delta\Phi_t^{\mathrm{on}}$ for the potential changes
during offline and online movement, respectively,
and $\Delta\Phi_t^{\mathrm{rel}}$ for the change during
free relocation in the taxi problem.
The total change $\Delta\Phi_t$ is the sum of
the changes in these phases.

\section{Elementary Steps and Signed Cut Differences}
\label{sec:steps}
\subsection{Steps and Their Geometry}

Fix a request point $r$ and let $Y$ be the offline configuration
once the offline algorithm has a server at $r$. We hold $Y$
fixed while \DC{} moves towards $r$, with no movement needed
if an online server already occupies $r$.

\begin{definition}[Elementary step]\label{def:elementary-step}
An \emph{elementary step} is an interval $[a,b]$ of the online
movement, with duration $\delta=b-a>0$, such that the following
conditions hold during $(a,b)$:
\begin{enumerate}[(i)]
    \item The set of moving servers and the obstruction
    relation remain fixed.

    \item Each mover stays in the interior of one original
    edge and moves in a fixed direction.

    \item No mover passes an offline server.
\end{enumerate}
\end{definition}

\begin{lemma}\label{lem:step-decomposition}
Every DoubleCoverage movement phase on a finite weighted tree admits a
finite partition into elementary steps.
\end{lemma}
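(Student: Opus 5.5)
The plan is to locate a finite set of \emph{event times} in the movement phase and then show that every open interval between consecutive event times satisfies conditions (i)--(iii) of Definition~\ref{def:elementary-step}. Write the phase as running from time $0$ to a final time $\tau$, the first moment an online server reaches $r$. We call $t$ an event time if at $t$ one of the following happens: some server's position coincides with a vertex of $T$; the obstruction relation changes; or some server's position coincides with an offline server (the points of $Y$, which are fixed in this phase). Including the endpoints $0$ and $\tau$, the event times split $[0,\tau]$ into intervals $[t_j,t_{j+1}]$.

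On each such open interval the relation is constant. This means the set of movers is fixed and every mover heads toward $r$, so its direction along an edge is fixed. A mover cannot pass a vertex without hitting it, which would be an event, so it stays in the interior of one original edge. Because the mover's position is continuous, passing an offline server would also require touching it, again an event. Hence (i)--(iii) hold, and it remains only to check that the event times are finite in number and that $\tau<\infty$.

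The first step is to bound the total movement. Every mover strictly decreases its distance to $r$ at unit speed, and obstructed servers stay still. So $\sum_i \dist(x_i,r)$ decreases at rate equal to the number of movers, which is at least $1$ at all times: the server closest to $r$ along any branch is unobstructed. This gives $\tau\le\sum_i\dist(x_i(0),r)$, and each server's trajectory is a monotone walk along the geodesic to $r$.

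The second step counts events. A monotone geodesic walk passes each vertex at most once, so there are at most $k|V|$ vertex events. It meets each of the $k$ offline points on at most one time interval (possibly a single instant), giving at most $k^2$ offline-coincidence events. The obstruction relation depends only on which servers lie on the paths $[x_i,r]$, together with coincidences among servers. It can change only when a moving server lands on or joins the path of another server, or when servers merge.

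This count is the main obstacle. My first attempt is to observe that once server $j$ becomes obstructed by server $i$, it stays obstructed until $i$ reaches $r$. The reason is that $i$ lies on $[x_j,r]$ and moves toward $r$ along that same path, while $x_j$ is fixed. Once $i$ reaches $r$, the phase ends. So obstruction changes are monotone: each pair switches from unobstructed to obstructed at most once. Merges of co-located movers also count at most once per pair, since after merging one of them is designated obstructed. This bounds the relation changes by $O(k^2)$. One small subtlety is times where a server sits exactly at a vertex or at an offline point; these are isolated instants and become boundaries of steps.

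Combining the counts gives finitely many event times, hence a finite partition into elementary steps. Intervals where some server is stationary at an offline point are harmless, because (iii) only concerns movers.
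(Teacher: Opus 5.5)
Your proof is correct and follows essentially the same route as the paper's. Both rest on the fact that an obstructed server stays obstructed, and hence stationary, for the rest of the phase, together with the $k|V|$ bound on vertex hits, finitely many offline passings and a finite duration; the differences are cosmetic (your pairwise monotone count of obstruction changes and the bound $\tau\le\sum_i\dist(x_i(0),r)$, versus the paper's stopping/meeting count and the exact duration $m(0)$).

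One wording fix: define the vertex and offline-coincidence events only for moving servers. As written, a stationary server sitting at a vertex or an offline point for a whole interval would make every time in that interval an event time, so your claim that these are isolated instants fails for stationary servers.
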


A proof is given in Appendix~\ref{app:events}.

We write $\Delta F$ for the value of $F$ after a step minus its value
before the step. Calculations on intervals strictly inside a step extend
to its endpoints by the continuity argument in
Appendix~\ref{app:continuity}.

Fix an elementary step with request point $r$.
Let $U$ and $B$ be the sets of upward and downward movers,
respectively, and put $u=|U|$. For each mover $i$, let $S_i$
be the segment from its initial to its final position, excluding
both endpoints. Choose $z_i\in S_i$ and, for $i\in U$, define
$A_i:=T_{z_i}$. All counts in these regions are evaluated
before the step.

\begin{lemma}[cf.~{\cite[Observation~7]{BCN23}}]
\label{lem:geometry}
There is at most one downward mover. The regions $A_i$,
$i\in U$, are pairwise disjoint and do not contain $r$.
If $B=\{j\}$, then $A:=T_{z_j}$ contains $r$ and every $A_i$.
The segments $S_i$ are also pairwise disjoint.
\end{lemma}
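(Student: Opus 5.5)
The plan is to base everything on the geometry of DoubleCoverage on a rooted tree. A server moving toward $r$ moves downward exactly when $r$ lies below its current position, i.e.\ $r\in T_{z_j}$ for interior points $z_j$ of its segment. It moves upward otherwise. Each mover $i$ is unobstructed throughout the step, so the open path $(x_i,r)$ contains no other online server during the step; in particular this holds at the start, and it holds for the whole segment $S_i$ by condition (i).

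First I will show there is at most one downward mover. Suppose $j\neq j'$ both move downward, with interior points $z_j$ and $z_{j'}$. Both regions $T_{z_j}$ and $T_{z_{j'}}$ contain $r$, so they are nested, say $T_{z_{j'}}\subseteq T_{z_j}$. Then the path from $z_j$ to $r$ goes down through the edge containing $z_{j'}$ and passes through the current position of $j'$ strictly before reaching $r$; $j'$ is not at $r$, since the movement has not ended. This contradicts the unobstructedness of $j$. If the edges coincide, both servers are on one edge heading the same way and the lower one lies strictly between the upper one and $r$; coincident positions are excluded by the designation rule.

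Next, for upward movers, $r\notin A_i=T_{z_i}$ by definition of upward motion. For disjointness, take $i\neq i'$ in $U$. Two regions of the form $T_z$ are either nested or disjoint. If $A_{i'}\subseteq A_i$, then $x_{i'}$ lies below $z_i$. The path from $x_{i'}$ to $r$ exits $A_i$ upward through $z_i$'s edge, so it passes through $x_i$ or through points of $S_i$. I will choose the comparison at the start of the step: $x_i$ lies on $[x_{i'},r]$ strictly inside the path, unless $x_i=x_{i'}$, which is excluded as before. This contradicts the unobstructedness of $i'$. If $B=\{j\}$ and some $A_i\not\subseteq A$, then by nestedness either $A\subseteq A_i$, which is impossible since $r\in A$ and $r\notin A_i$, or the two regions are disjoint. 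In the disjoint case the path from $x_i$ to $r$ climbs out of $A_i$ and descends into $A$ along the edge containing $z_j$, and so passes through $x_j$, again obstructing $i$.

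Finally, disjointness of the segments $S_i$ will follow from the disjointness of the regions. Each $S_i$ lies in the interior of one edge. For upward movers $S_i\subseteq A_i$, or rather in the closure of $A_i$ minus its top; more precisely, the part of the edge below the current top point. Hence segments of distinct upward movers sit in disjoint regions. The downward segment $S_j$ lies on the edge just above $A$ and outside every $A_i\subseteq A$.

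I expect the main obstacle to be the bookkeeping at the boundaries. This includes the cases where two movers share an edge, a mover's segment endpoint coincides with another mover's position, and the convention for co-located servers. I will handle these using conditions (i) and (ii), comparing positions at the start of the step and using the fact that the obstruction relation is constant on $(a,b)$.
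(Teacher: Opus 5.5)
Your strategy is the paper's: every claim is reduced to an obstruction contradiction. Your nested-or-disjoint argument for $A_i\subseteq A$ is a legitimate variant of the paper's observation that an upward mover's path joins $[\rho,r]$ at an original vertex, which therefore lies below every point of $S_j$. There is, however, a slip in the upward nested case. You assert that $x_i$ lies strictly inside $[x_{i'},r]$ unless $x_i=x_{i'}$. This fails if $i$ and $i'$ share an edge with $x_{i'}\in(x_i,z_i)$, which $A_{i'}\subseteq A_i$ does not exclude (take $z_{i'}\in(x_{i'},z_i)\subseteq S_{i'}$). Then $x_i$ is not on $[x_{i'},r]$ at all; instead $x_{i'}\in S_i\subseteq(x_i,r)$ obstructs $i$. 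Nesting of the $z$-regions does not tell you which mover is higher, so compare actual positions. You already did this for two downward movers on one edge, and it is the paper's ``the higher mover would obstruct the lower one.''

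The segment paragraph is not yet an argument. In general $S_i\not\subseteq A_i$, because the part of $S_i$ above $z_i$ lies outside $T_{z_i}$. Likewise $S_j$ is not ``just above $A$'', since its part below $z_j$ lies in $A$. You also have not shown that the enlarged regions you gesture at are pairwise disjoint. The missing observation is that your region statements hold for every choice of $z_i\in S_i$. If $p\in S_i\cap S_{i'}$ for two upward movers, choose $z_i=z_{i'}=p$; then $A_i=A_{i'}=T_p\neq\emptyset$, contradicting disjointness. No point $p$ can lie in $S_i\cap S_j$, since moving downward through $p$ toward $r$ forces $r\in T_p$, while moving upward through $p$ forces $r\notin T_p$. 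The paper instead argues directly that the upward segments lie on separate branches, all joining $[\rho,r]$ below every point of $S_j$. With these two repairs your proof is complete.
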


\begin{proof}
Every downward mover lies on $[\rho,r]$. If there were two,
the one nearer $r$ would obstruct the other.
Each upward mover travels out of a descendant region that
does not contain $r$. Two such regions cannot be nested,
since the higher mover would obstruct the lower one.
Thus the $A_i$ are pairwise disjoint and exclude $r$.

Suppose that $j$ moves downward. The path of any upward mover joins $[\rho,r]$
strictly below $j$, since otherwise $j$ would
obstruct it.
This joining point is an original vertex or $r$, so $j$ cannot reach it in the step interior.
Hence the joining point lies below every point of $S_j$, giving $A_i\subseteq A$ and $r\in A$.

The upward segments lie on separate branches and are therefore
pairwise disjoint. If there is a downward mover $j$, these branches
join $[\rho,r]$ below every point of $S_j$, so none of the upward
segments intersects $S_j$. Movers may meet at the beginning or end of the step,
but $S_i$ excludes the starting and finishing
positions of mover $i$.
\end{proof}

\subsection{Cut Differences}

For a region $S\subseteq T$, let
$D(S):=|X\cap S|-|Y\cap S|$ denote the number of online servers
in $S$ minus the number of offline servers in $S$, counting
servers with multiplicity. 

For a point $z$ in the interior
of an edge, define the \emph{signed cut difference} at $z$ by
\begin{equation}\label{eq:cut}
D(z):=D(T_z)=|X\cap T_z|-|Y\cap T_z|.
\end{equation}
Let $M(X,Y)$ be the minimum sum of distances in a perfect
matching between the online and offline servers. Below, integration over $T$ means summing the integrals over
its edges, each viewed as an interval of its given length.
Note that the values at vertices do not affect the integral.

 \begin{lemma}[Tree transportation identity]\label{lem:transport}
For any two $k$-point multisets on $T$,
\begin{equation}\label{eq:transport}
M(X,Y)=\int_T |D(z)|\,dz.
\end{equation}
\end{lemma}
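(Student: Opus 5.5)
I will prove the identity by establishing the two inequalities $M(X,Y)\ge\int_T|D(z)|\,dz$ and $M(X,Y)\le\int_T|D(z)|\,dz$ separately. Both directions rest on one observation: for points $x,y\in T$ and $z$ in an edge interior, $z$ lies on the path $[x,y]$ exactly when exactly one of $x,y$ lies in $T_z$. Since every path is a finite union of edge pieces, this gives $\dist(x,y)=\int_T \mathbf{1}[\,|\{x,y\}\cap T_z|=1\,]\,dz$.

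\emph{Lower bound.} Fix any perfect matching $\pi$ between $X$ and $Y$, with multiplicities treated as distinct labelled points. Summing the observation over the matched pairs gives
\[
\sum_{i}\dist(x_i,y_{\pi(i)})=\int_T N_\pi(z)\,dz,
\]
where $N_\pi(z)$ counts the pairs with exactly one endpoint in $T_z$. Among the pairs crossing the cut at $z$, let $p$ have the online endpoint inside $T_z$ and $q$ have the offline endpoint inside. Pairs lying entirely inside $T_z$ contribute equally to both counts, so $D(z)=p-q$. Hence $N_\pi(z)=p+q\ge|D(z)|$. Integrating and minimising over $\pi$ gives $M\ge\int_T|D|$.

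\emph{Upper bound.} I will construct a matching that crosses each cut exactly $|D(z)|$ times, by greedy bottom-up matching. Process the vertices in post-order. At each vertex $v$, collect the unmatched online and offline points coming from the children's subtrees together with the points located at $v$ or in the interiors of the edges to its children. Match these arbitrarily online-to-offline as far as possible, and pass the surplus, all of one type, upward to the parent. The root ends with no surplus because $|X|=|Y|=k$.

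To handle points in edge interiors cleanly, I first subdivide the edges at every point of $X\cup Y$, making them vertices. This changes neither $M$ nor the integral, since the integral only sees edge interiors and the cut function $D$ is unchanged. After subdivision, the surplus passed up across the edge above $v$ has size exactly $|D(z)|$ for $z$ in that edge, because all points of $T_v$ other than the surplus have been matched inside $T_v$.

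A matched pair $(x,y)$ is matched at $v=\lca(x,y)$. Its path therefore consists of edges each carrying $x$ or $y$ as part of a surplus. Consequently, for the constructed matching $\pi$, $N_\pi(z)$ equals the surplus size $|D(z)|$ on every edge. The formula for $\sum_i\dist(x_i,y_{\pi(i)})$ from the lower bound then gives a matching of cost $\int_T|D|$, so $M\le\int_T|D|$.

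The main obstacle is the upper bound. It requires the induction invariant that the surplus leaving $T_v$ consists of $|D(T_v)|$ points all of the same type, and that every other point of $T_v$ is matched by a path inside $T_v$. I will state this invariant explicitly and verify it by induction over the post-order, which is the step needing the most care with multiplicities.
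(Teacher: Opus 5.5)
Your proposal is correct and follows essentially the same route as the paper. The lower bound uses $N_\pi(z)=p+q\ge|p-q|=|D(z)|$ for an arbitrary matching. The upper bound subdivides at the points of $X\cup Y$ and matches greedily from the leaves up, with the invariant that exactly $|D(T_v)|$ same-type points leave each subtree, so exactly $|D(z)|$ matching paths cross each edge. One small point to tidy: your claim that $z\in[x,y]$ exactly when one endpoint lies in $T_z$ fails at $z\in\{x,y\}$. The paper avoids this by restricting to $z\notin X\cup Y$, and since this set has measure zero it does not affect the integrals.
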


Equation~\eqref{eq:transport} is a special case of the tree
Kantorovich--Rubinstein formula in~\cite{EM12}. We give a direct proof in Appendix~\ref{app:transport-proof}.

Relocating one online server and one offline server from
$r$ to $s$ changes both counts in $T_z$ by
$\mathbf{1}_{s\in T_z}-\mathbf{1}_{r\in T_z}$.
Their difference $D(z)$ therefore remains unchanged,
and so does $M$.

Moving the servers in either configuration a total distance
$C$ changes $M$ by at most $C$ in absolute value. For the
upper bound, use an optimal matching from before the
movement to pair the same servers afterwards and apply
the triangle inequality. Applying the argument in reverse,
starting from an optimal matching after the movement,
gives the lower bound.

Before the step, the cut difference is constant on each
segment $S_i$. Indeed, $S_i$ contains no vertex or offline
server by the definition of an elementary step, and any
online server initially in $S_i$ would obstruct mover $i$.
Let $q_i:=D(z_i)$ denote this constant value. After the
step, the cut difference on $S_i$ is $q_i-1$ if $i$ moves
upward and $q_i+1$ if $i$ moves downward. 

The corresponding
changes in $|D(z)|$ are given, for any integer $q$, by
\begin{align}
|q-1|-|q|
&=
\begin{cases}
-1,&q\ge1,\\
+1,&q\le0,
\end{cases}
\label{eq:up}\\
|q+1|-|q|
&=
\begin{cases}
+1,&q\ge0,\\
-1,&q\le-1.
\end{cases}
\label{eq:down}
\end{align}
By Lemma~\ref{lem:geometry}, the contributions add over
disjoint segments of length $\delta$. Other cut differences
are unchanged except possibly at segment endpoints, whose
values do not affect the integral.

\subsection{Proof of the Matching Estimate}

We show that the offline server at the request forces at least
one mover to contribute $-\delta$ to $\Delta M$ during an
elementary step of duration $\delta$.

\Needspace{9\baselineskip}
\begin{lemma}\label{lem:sign}
Assume that $Y$ is fixed and contains a server at $r$.
The following statements hold for cut differences evaluated
before an elementary step.
\begin{enumerate}[(a)]
    \item If $B=\emptyset$, then $D(A_i)\ge1$ for some $i\in U$.

    \item If $B=\{j\}$ and $A=T_{z_j}$, then either
    $D(A)\le-1$, or $D(A)\ge0$ and $D(A_i)\ge1$
    for some $i\in U$.
\end{enumerate}
\end{lemma}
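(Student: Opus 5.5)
The plan is to find a region that contains no online servers other than the movers listed, and then count servers in it. The key input is the obstruction structure. An unobstructed server is one with no online server strictly between it and $r$. So online servers sitting near $r$ "shield" everything behind them, and the movers are exactly the frontier of online servers as seen from $r$.

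\textbf{Part (a).} Here $B=\emptyset$, so every mover moves upward and no online server lies on $[\rho,r]$ above the region containing $r$ in a way that would make it move downward. Let $R:=T\setminus\bigcup_{i\in U}A_i$, which contains $r$ by Lemma~\ref{lem:geometry}.

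We first show that $R$ contains no online server. Suppose $x\in X\cap R$, and take such an $x$ nearest to $r$ along its path. Then $x$ is unobstructed, unless it is a co-located duplicate, in which case the designated representative at that point is unobstructed. So this point hosts a mover.
\begin{itemize}
\item The mover cannot be upward: its descendant region $A_i$ would then contain $x$, contradicting $x\in R$.
\item It cannot be downward, since $B=\emptyset$.
\item It cannot be stationary. A stationary unobstructed server only occurs at $r$, and the movement phase has not ended, so no online server is at $r$.
\end{itemize}
This is a contradiction, so $|X\cap R|=0$, and in particular there is no online server at $r$.

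Now count. We have $\sum_{i\in U}D(A_i)=D(T)-D(R)=0-D(R)=|Y\cap R|\ge 1$, because the offline server at $r$ lies in $R$. Hence some $D(A_i)\ge1$.

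\textbf{Part (b).} Here $B=\{j\}$. The same argument applies inside $A=T_{z_j}$. Set $R':=A\setminus\bigcup_{i\in U}A_i$, which contains $r$. Before the step there is no online server in $R'$: an online server there nearest to $r$ would be unobstructed and would not lie in any $A_i$. It also could not be the point $z_j$ region's boundary, because $S_j$ contains no online servers and $j$ itself sits outside $T_{z_j}$ at its initial position.

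Then $D(A)=\sum_{i\in U}D(A_i)+D(R')=\sum_{i\in U}D(A_i)-|Y\cap R'|\le\sum_{i\in U}D(A_i)-1$.

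Now split on the sign of $D(A)$.
\begin{itemize}
\item If $D(A)\le -1$, we are done.
\item Otherwise $D(A)\ge0$, so $\sum_{i\in U}D(A_i)\ge1$, and some $D(A_i)\ge1$.
\end{itemize}

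\textbf{Main obstacle.} The delicate step is showing that $R$ (respectively $R'$) contains no online server. This requires care on three points:
\begin{itemize}
\item co-located servers, where only the designated representative is unobstructed;
\item the exact boundary points of the $A_i$, namely the $z_i$ chosen inside $S_i$, which is fine since $S_i$ has no online servers;
\item confirming that an online server outside all $A_i$ with nothing between it and $r$ must indeed be one of the movers.
\end{itemize}
I would handle these by choosing $x$ minimizing $\dist(x,r)$ over $X\cap R$. Then any server strictly between $x$ and $r$ would lie either in $R$, contradicting minimality, or in some $A_i$. The second case is impossible because the path from $x\notin A_i$ to $r\notin A_i$ avoids the subtree $A_i$.
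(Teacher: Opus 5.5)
Your proposal is correct and follows essentially the paper's route: partition $T$ (resp.\ $A$) into the disjoint $A_i$ and a residual region $R$, and show that $R$ holds no online server because the online server in $R$ nearest $r$ would be an unobstructed mover that is neither upward (it would lie in its own $A_i$) nor $j$ (which starts outside $A$). You then conclude using the offline server at $r$ together with $D(T)=0$ or $D(A)\ge0$. The only difference is cosmetic: you argue by contradiction from a server of $R$ minimizing $\dist(x,r)$, whereas the paper directly takes the nearest online server to $r$ on $[x_p,r]$; in (b) you should also state explicitly that $[x,r]\subseteq A$, so that your minimality argument stays inside $A$.
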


\begin{proof}
\emph{Case $B=\emptyset$.}
We first show that every online server belongs to some
$A_i$ (see Figure~\ref{fig:cut-differences-no-downward}).
For a server initially at $x_p$, choose an online server
$i$ closest to $r$ on $[x_p,r]$. If several servers occupy
that position, choose the one that moves during the step.
No online server lies strictly between $x_i$ and $r$,
so $i$ is unobstructed. Since there is no downward mover,
$i$ moves upward. The point $z_i$ therefore lies on the
upward part of $[x_p,r]$, so $x_p\in T_{z_i}=A_i$.

Consequently, $R:=T\setminus\bigcup_{i\in U}A_i$ contains
no online server. By Lemma~\ref{lem:geometry}, none of
the $A_i$ contains $r$, so the offline server at $r$
belongs to $R$ and $D(R)\le-1$.
The regions $R$ and $A_i$, $i\in U$, partition $T$,
and $D(T)=0$ because both configurations contain
$k$ servers. Hence
\[
\sum_{i\in U}D(A_i)=D(T)-D(R)=-D(R)\ge1.
\]
Since the cut differences are integers, at least one
$i\in U$ satisfies $D(A_i)\ge1$, proving (a).

\emph{Case $B=\{j\}$.}
By Lemma~\ref{lem:geometry}, $A$ contains $r$ and every
$A_i$ (see Figure~\ref{fig:cut-differences-downward}).
If $D(A)\le-1$, the first alternative in (b) holds.
It remains to consider $D(A)\ge0$, since $D(A)$ is
an integer.

We show that every online server in $A$ belongs to
some $A_i$. For a server initially at $x_p\in A$,
the path $[x_p,r]$ lies entirely in $A$. Choose an
online server closest to $r$ on this path, using the
same rule for coincident servers as in (a).
This server is unobstructed and starts in $A$,
whereas the downward mover $j$ starts outside $A$.
It must therefore be an upward mover $i$, and the
same path argument as in (a) shows that $x_p\in A_i$.

Thus $R:=A\setminus\bigcup_{i\in U}A_i$ contains no
online server but contains the offline server at $r$,
so $D(R)\le-1$. Since the regions $R$ and $A_i$,
$i\in U$, partition $A$, we obtain
\[
\sum_{i\in U}D(A_i)=D(A)-D(R)\ge1.
\]
Therefore $D(A_i)\ge1$ for some $i\in U$, which
establishes the second alternative in (b).
\end{proof}

\begin{figure}[t]
\centering
\begin{tikzpicture}[
    x=0.72cm,
    y=0.72cm,
    online/.style={
        circle, draw=black, fill=black!75, text=white,
        minimum size=5.5mm, inner sep=0pt, font=\scriptsize
    },
    offline/.style={
        regular polygon, regular polygon sides=3,
        draw=black, fill=white, text=black,
        minimum size=7mm, inner sep=0pt, font=\scriptsize
    },
    treeedge/.style={
        draw=black!65, line width=0.7pt
    },
    region/.style={
        dashed, rounded corners=4pt,
        draw=black!55, fill=black!5
    },
    movement/.style={
        -{Stealth[length=2mm]}, line width=0.8pt
    },
    contribution/.style={
        font=\small\boldmath,
        fill=none, draw=none, inner sep=0pt
    },
    negative/.style={contribution,text=DCnegative},
    positive/.style={contribution,text=DCpositive},
    annotation/.style={
        font=\small, align=center, fill=none, draw=none
    }
]

\begin{scope}[on background layer]
    \draw[region] (-6.2,-5.4) rectangle (-1.9,-1.5);
    \draw[region] (-1.1,-5.4) rectangle (2.3,-1.5);
\end{scope}

\draw[treeedge] (0,0) -- (-4,-1) -- (-4,-2);
\draw[treeedge] (0,0) -- (0.6,-1) -- (0.6,-2);
\draw[treeedge] (0,0) -- (4.6,-1.5) -- (4.6,-4.7);
\draw[treeedge] (-4,-2) -- (-5.2,-4.6);
\draw[treeedge] (-4,-2) -- (-2.8,-4.6);
\draw[treeedge] (0.6,-2) -- (1.4,-4.6);

\node[
    circle, fill=black!65, inner sep=1.6pt,
    label=above:{$\rho$}
] at (0,0) {};

\node[online] at (-4,-2)     {$x_1$};
\node[online] at (0.6,-2)    {$x_2$};
\node[online] at (-5.2,-4.6) {$x_3$};

\node[offline] at (-2.8,-4.6) {$y_1$};
\node[offline] at (1.4,-4.6)  {$y_2$};
\node[offline] at (4.6,-4.7)  {$y_3$};

\node[annotation,anchor=north]
    at (4.6,-5.25) {$r$};

\fill (-4,-1.5) circle (1.4pt);
\fill (0.6,-1.5) circle (1.4pt);

\node[annotation,anchor=west]
    at (-3.85,-1.22) {$z_1$};
\node[annotation,anchor=west]
    at (0.75,-1.22) {$z_2$};

\draw[movement,draw=DCnegative]
    (-4.55,-2) -- (-4.55,-1.1)
    node[
        pos=0.2,left,negative,yshift=-0.5mm
    ] {$-\delta$};

\draw[movement,draw=DCpositive]
    (0.05,-2) -- (0.05,-1.1)
    node[
        pos=0.2,left,positive,yshift=-0.5mm
    ] {$+\delta$};

\node[annotation] at (-4.05,-6.15)
    {$A_1=T_{z_1}$\\$D(A_1)=1$};

\node[annotation] at (0.6,-6.15)
    {$A_2=T_{z_2}$\\$D(A_2)=0$};

\node[annotation,anchor=west,align=left]
    at (5.4,-2.3)
    {$R=T\setminus(A_1\cup A_2)$\\
     No online server\\
     $D(R)=-1$};

\end{tikzpicture}
\caption{Before an elementary step with $B=\emptyset$. The region
$R=T\setminus(A_1\cup A_2)$ contains the offline server at $r$ but no online
server. Arrow labels show the contributions to $\Delta M$.}
\label{fig:cut-differences-no-downward}
\end{figure}

\begin{figure}[t]
\centering
\begin{tikzpicture}[
    x=0.68cm,
    y=0.68cm,
    online/.style={
        circle, draw=black, fill=black!75, text=white,
        minimum size=5.5mm, inner sep=0pt, font=\scriptsize
    },
    offline/.style={
        regular polygon, regular polygon sides=3,
        draw=black, fill=white, text=black,
        minimum size=7mm, inner sep=0pt, font=\scriptsize
    },
    treeedge/.style={
        draw=black!65, line width=0.7pt
    },
    outerregion/.style={
        dashed, rounded corners=4pt, draw=black!55
    },
    innerregion/.style={
        dashed, rounded corners=4pt,
        draw=black!45, fill=black!5
    },
    movement/.style={
        -{Stealth[length=2mm]}, line width=0.8pt
    },
    contribution/.style={
        font=\small\boldmath,
        fill=none, draw=none, inner sep=0pt
    },
    negative/.style={contribution,text=DCnegative},
    positive/.style={contribution,text=DCpositive},
    annotation/.style={
        font=\small, align=center, fill=none, draw=none
    }
]

\begin{scope}

\begin{scope}[on background layer]
    \draw[outerregion]
        (-4.1,-7.2) rectangle (3.7,-1.8);
    \draw[innerregion]
        (-3.7,-6.85) rectangle (-0.7,-3.6);
\end{scope}

\draw[treeedge] (0,0) -- (0,-3.15);
\draw[treeedge] (0,-0.65) -- (3.1,-1.15)
    coordinate[pos=0.62] (xTwo);
\draw[treeedge]
    (0,-3.15) -- (-2,-3.2) -- (-2,-5.25);
\draw[treeedge]
    (0,-3.15) -- (2,-3.2) -- (2,-5.2);

\node[
    circle, fill=black!65, inner sep=1.6pt,
    label=above:{$\rho$}
] at (0,0) {};

\node[online]  at (0,-1.25)   {$x_j$};
\node[online]  at (xTwo)      {$x_2$};
\node[offline] at (3.1,-1.15) {$y_3$};
\node[online]  at (-2,-4.1)   {$x_1$};
\node[offline] at (-2,-5.25)  {$y_1$};
\node[offline] at (2,-5.2)    {$y_2$};

\node[annotation,anchor=north]
    at (2,-5.7) {$r$};

\fill (0,-1.8) circle (1.3pt);
\fill (-2,-3.6) circle (1.3pt);

\node[annotation,anchor=east]
    at (-0.15,-1.65) {$z_j$};
\node[annotation,anchor=west]
    at (-1.85,-3.48) {$z_1$};

\draw[movement,draw=DCnegative]
    (0.5,-1.45) -- (0.5,-2.15)
    node[
        pos=0.75,right,negative,yshift=-1.2mm
    ] {$-\delta$};

\draw[movement,draw=DCpositive]
    (-2.5,-4.05) -- (-2.5,-3.35)
    node[
        pos=0.2,left,positive,yshift=-0.5mm
    ] {$+\delta$};

\node[
    annotation,inner sep=0pt,anchor=north west
] at (-3.65,-1.95)
    {$A=T_{z_j}$\\$D(A)=-1$};

\node[annotation,inner sep=0pt]
    at (-2,-6.1)
    {$A_1$\\$D(A_1)=0$};

\node[annotation,inner sep=0pt]
    at (0.75,-4.35) {$R$};

\node[font=\small] at (0,-7.7)
    {(a)\; $D(A)\le-1$};

\end{scope}

\begin{scope}[shift={(9,0)}]

\begin{scope}[on background layer]
    \draw[outerregion]
        (-4.4,-7.2) rectangle (3.7,-1.8);
    \draw[innerregion]
        (-4.05,-6.85) rectangle (-1.3,-3.6);
    \draw[innerregion]
        (-0.9,-6.85) rectangle (1.9,-3.6);
\end{scope}

\draw[treeedge] (0,0) -- (0,-3.15);
\draw[treeedge] (0,-0.65) -- (2.6,-1.1);
\draw[treeedge]
    (0,-3.15) -- (-2.5,-3.2) -- (-2.5,-4.1);
\draw[treeedge]
    (0,-3.15) -- (0.5,-3.2) -- (0.5,-5.25);
\draw[treeedge]
    (0,-3.15) -- (2.8,-3.2) -- (2.8,-5.2);

\node[
    circle, fill=black!65, inner sep=1.6pt,
    label=above:{$\rho$}
] at (0,0) {};

\node[online]  at (0,-1.25)   {$x_j$};
\node[offline] at (2.6,-1.1)  {$y_3$};
\node[online]  at (-2.5,-4.1) {$x_1$};
\node[online]  at (0.5,-4.1)  {$x_2$};
\node[offline] at (0.5,-5.25) {$y_1$};
\node[offline] at (2.8,-5.2)  {$y_2$};

\node[annotation,anchor=north]
    at (2.8,-5.7) {$r$};

\fill (0,-1.8) circle (1.3pt);
\fill (-2.5,-3.6) circle (1.3pt);
\fill (0.5,-3.6) circle (1.3pt);

\node[annotation,anchor=east]
    at (-0.15,-1.65) {$z_j$};
\node[annotation,anchor=west]
    at (-2.35,-3.48) {$z_1$};
\node[annotation,anchor=west]
    at (0.65,-3.48) {$z_2$};

\draw[movement,draw=DCpositive]
    (0.5,-1.45) -- (0.5,-2.15)
    node[
        pos=0.75,right,positive,yshift=-1.2mm
    ] {$+\delta$};

\draw[movement,draw=DCnegative]
    (-3,-4.05) -- (-3,-3.35)
    node[
        pos=0.2,left,negative,yshift=-0.5mm
    ] {$-\delta$};

\draw[movement,draw=DCpositive]
    (0,-4.05) -- (0,-3.35)
    node[
        pos=0.2,left,positive,yshift=-0.5mm
    ] {$+\delta$};

\node[
    annotation,inner sep=0pt,anchor=north west
] at (-3.85,-1.95)
    {$A=T_{z_j}$\\$D(A)=0$};

\node[annotation,inner sep=0pt]
    at (-2.675,-6.1)
    {$A_1$\\$D(A_1)=1$};

\node[annotation,inner sep=0pt]
    at (0.5,-6.1)
    {$A_2$\\$D(A_2)=0$};

\node[annotation,inner sep=0pt]
    at (2.25,-4.35) {$R$};

\node[font=\small] at (0,-7.7)
    {(b)\; $D(A)\ge0$};

\end{scope}

\end{tikzpicture}
\caption{Before an elementary step with $B=\{j\}$. In both panels,
$R=A\setminus\bigcup_{i\in U}A_i$ contains no online server and has
$D(R)=-1$. Arrow labels show the contributions to $\Delta M$.}
\label{fig:cut-differences-downward}
\end{figure}

\Needspace{10\baselineskip}
The estimate below is also implied by the tree analysis
of Bansal et al.~\cite{BEJKP18}.

\Needspace{8\baselineskip}
\begin{corollary}[Matching estimate]
\label{cor:matching}
Assume that $Y$ is fixed and contains a server at $r$.
For an elementary step of duration $\delta$ with
$a=u+|B|$ movers,
\begin{equation}\label{eq:matching-step}
\Delta M\le(a-2)\delta
=
\begin{cases}
(u-2)\delta,&B=\emptyset,\\
(u-1)\delta,&B=\{j\}.
\end{cases}
\end{equation}
\end{corollary}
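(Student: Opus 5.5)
We will compute $\Delta M$ directly from the transportation identity of Lemma~\ref{lem:transport}. By Lemma~\ref{lem:geometry} the open segments $S_i$ are pairwise disjoint. Each has length $\delta$. Cut differences change only on these segments, apart from finitely many endpoints, which do not affect the integral. Hence
\[
\Delta M=\sum_{i}\delta\bigl(|q_i'|-|q_i|\bigr),
\]
where $q_i=D(z_i)$ is the constant value before the step and $q_i'=q_i\mp1$ the value after it. By \eqref{eq:up} and \eqref{eq:down}, each mover contributes exactly $+\delta$ or $-\delta$. It therefore suffices to exhibit one mover whose contribution is $-\delta$. The total is then at most $(a-1)\delta-\delta=(a-2)\delta$.

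To identify that mover, we note that for an upward mover $i$ we have $q_i=D(T_{z_i})=D(A_i)$. For the downward mover $j$ we have $q_j=D(A)$ with $A=T_{z_j}$. The regions are evaluated before the step, and this is consistent with how the $q_i$ were defined. We then invoke Lemma~\ref{lem:sign}.

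\emph{Case $B=\emptyset$.} Part (a) of Lemma~\ref{lem:sign} gives some $i\in U$ with $q_i=D(A_i)\ge1$. By \eqref{eq:up}, this mover contributes $-\delta$. The other $u-1$ movers contribute at most $\delta$ each, so $\Delta M\le(u-2)\delta$.

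\emph{Case $B=\{j\}$.} Part (b) gives two alternatives. If $D(A)\le-1$, then $q_j\le-1$, so by \eqref{eq:down} the downward mover contributes $-\delta$. If instead $D(A)\ge0$, then some $i\in U$ has $D(A_i)\ge1$, and that upward mover contributes $-\delta$. In both cases the remaining $u$ movers contribute at most $u\delta$, giving $\Delta M\le(u-1)\delta=(a-2)\delta$.

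Most of the difficulty has already been handled in Lemma~\ref{lem:sign}. The remaining point needing care is the bookkeeping. We must check that the cut difference on $S_i$ really is $D(A_i)$, which holds because $S_i$ contains no vertex, no offline server and no other online server. We must also check that the disjointness of the segments justifies adding the per-segment contributions. Finally, passing from the open step to its endpoints uses the continuity remark in Appendix~\ref{app:continuity}. This is legitimate because $M$ is $1$-Lipschitz in the total movement, as noted in the text.
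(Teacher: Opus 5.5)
Your proposal is correct and follows essentially the same route as the paper. You use the transportation identity to add the $\pm\delta$ contributions over the disjoint segments $S_i$. You then invoke Lemma~\ref{lem:sign} to find one mover contributing $-\delta$, via the downward mover when $D(A)\le-1$ and an upward mover with $D(A_i)\ge1$ otherwise, and bound each remaining mover by $+\delta$.
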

\begin{proof}
Suppose first that $B=\emptyset$. By Lemma~\ref{lem:sign},
some upward mover $i$ has $D(A_i)\ge1$, so its contribution
to $\Delta M$ is $-\delta$ by~\eqref{eq:up}.

Now suppose that $B=\{j\}$, and put $A=T_{z_j}$.
If $D(A)\le-1$, the downward mover $j$ contributes
$-\delta$ by~\eqref{eq:down}. Otherwise,
Lemma~\ref{lem:sign} shows that some upward mover $i$
satisfies $D(A_i)\ge1$. Its contribution to $\Delta M$
is therefore $-\delta$ by~\eqref{eq:up}.

Thus, in both cases, at least one mover contributes $-\delta$,
while every other mover contributes at most $\delta$.
With $a=u+|B|$ movers, this gives
$\Delta M\le-\delta+(a-1)\delta=(a-2)\delta$.
\end{proof}

\FloatBarrier
\section{The \texorpdfstring{\(k\)}{k}-Server Problem on Trees}
\label{sec:server}
 We apply the matching estimate to give an alternative proof
of the $k$-competitiveness of \DC{} for the $k$-server problem
on weighted trees. The argument uses the potential of
Chrobak and Larmore~\cite{CL91} and is similar to the analysis in~\cite{BEJKP18}.

For the online configuration $X=\{x_1,\ldots,x_k\}$,
let $P(X)$ be the sum of distances over all unordered
pairs of online servers. We use the potential
\begin{equation}\label{eq:server-potential}
\Phi_{\mathrm{srv}}(X,Y)=P(X)+kM(X,Y),\qquad
P(X)=\sum_{1\le a<b\le k}\dist(x_a,x_b).
\end{equation}
Both terms are nonnegative. To analyse online movement,
we combine Corollary~\ref{cor:matching} with the following
formula for the change in $P$.

\begin{lemma}\label{lem:pair}
Let $a=u+|B|$ be the total number of movers in an
elementary step of duration $\delta$. Then
\begin{equation}\label{eq:pair}
\Delta P=[2k-(k+1)a]\delta.
\end{equation}
\end{lemma}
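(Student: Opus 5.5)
We will compute $\Delta P$ by summing, over all unordered pairs of online servers, the rate of change of their distance during the step. Inside an elementary step all movers travel at unit speed along fixed directions within edge interiors, so each pairwise distance changes linearly. For a pair $\{p,q\}$ the rate is the sum of the contributions of each endpoint: a mover contributes $-1$ if it moves towards the other server, and $+1$ if it moves away. A non-mover contributes $0$. We therefore need, for each mover $i$, to count how many of the other $k-1$ servers it approaches.

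The key geometric claim is that every mover moves away from all other servers except those lying on its side, i.e.\ in the direction of $r$. More precisely, I would show that every mover $i$ moves towards all $k-a$ non-movers and towards all other movers, except that each mover is approached by exactly one... Rather, the clean statement should be as follows. For mover $i$, let $W_i$ be the component of $T\setminus\{x_i\}$ into which $i$ heads, namely the one containing $r$. Then a server $p\neq i$ is approached by $i$ if and only if $x_p\in W_i$. Using Lemma~\ref{lem:geometry} and the obstruction structure, every other server lies in $W_i$ except the servers of $A_i$ behind it. For an upward mover, the servers strictly inside $A_i$ other than $i$ itself are obstructed, and the argument in Lemma~\ref{lem:sign} assigns each server to the unique mover that is closest to $r$ on its path. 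Coincident servers at $x_i$ are also behind $i$, in the sense that they are at distance increasing. So the count is: mover $i$ approaches every server not "served" by it. Writing $n_i$ for the number of servers (including $i$) whose nearest unobstructed server towards $r$ is $i$, we have $\sum_i n_i = k$, because every server is assigned to exactly one mover. This holds both when $B=\emptyset$ and when $B=\{j\}$; in the latter case servers outside $A$ are assigned to $j$.

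Then the contribution of mover $i$ is $(n_i-1)-(k-n_i)=2n_i-k-1$ per unit time. Summing over movers gives $2k-(k+1)a$, which yields \eqref{eq:pair}. One must double-check pairs of two movers: each endpoint contributes independently and additively to $\tfrac{d}{dt}\dist$, since they move in edge interiors on the geodesic between them, each moving toward the other. This requires that the geodesic between two movers passes through both directions of motion, which follows from Lemma~\ref{lem:geometry}, because their segments are disjoint and each heads towards $r$, which lies between them in the tree sense.

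The main obstacle is justifying the assignment claim carefully: every server $p$ assigned to mover $i$ lies strictly behind $i$, so that $i$ moves away from it, while every unassigned server lies in $W_i$. For the downward mover $j$, "behind" means $T\setminus A$ together with co-located servers, and I would prove that the closest-to-$r$ server on $[x_p,r]$ for $x_p\notin A$ is $j$, since $[x_p,r]$ passes through $S_j$. For upward movers, I would reuse the path argument from Lemma~\ref{lem:sign}. The remaining servers lie in the $r$-side component by elimination.
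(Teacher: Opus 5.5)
Your proposal is correct, but it takes a different route from the paper.

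\textbf{Your route.} You differentiate each pairwise distance directly and charge the rate to the movers. Mover $i$ recedes from the $n_i-1$ other servers it leads and approaches the remaining $k-n_i$. Summing over the $a$ movers, the rate of $P$ is $\sum_i(2n_i-k-1)=2\sum_i n_i-(k+1)a=2k-(k+1)a$. The only structural input is that every server has exactly one leader, the unique unobstructed server (hence a mover) on its path to $r$. This handles $B=\emptyset$ and $B=\{j\}$ at once; it is the classical Chrobak--Larmore count.

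\textbf{The paper's route.} The paper instead writes $P=(k-1)\Theta+2L$ with $L=-\sum_{a<b}\dist(\rho,\lca(x_a,x_b))$. It reads off $\Delta\Theta$ from the directions of motion and then only has to locate the pairs whose lowest common ancestor moves. This needs a separate computation in each case and the nesting structure from Lemmas~\ref{lem:geometry} and~\ref{lem:sign}. In exchange, it stays within the rooted framework used throughout and isolates the term $\Theta$ that reappears in the HST potential.

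\textbf{What yours buys.} Your route yields more local information, namely the rate $2n_i-k-1$ of each mover. The crux you flag can be settled at an interior time $t$ without cases. Suppose $x_i(t)\in[x_p(t),r]$ with $p\neq i$. The server closest to $r$ on $[x_p(t),r]$ then lies on $[x_i(t),r]$. It cannot lie strictly between $x_i(t)$ and $r$, since it would obstruct $i$, and no other server occupies $x_i(t)$ inside the step. So that server is $i$, i.e.\ $i$ leads $p$. Conversely, if $i$ leads $p$, then $p$ lies on the side of $x_i(t)$ away from $r$, so $i$ recedes from it.

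\textbf{Minor points.} Additivity of $\frac{d}{dt}\dist(x_p(t),x_q(t))$ holds for any two distinct points moving inside edge interiors, whatever their directions. So your appeal to $r$ lying between two movers is unnecessary, and read literally it is inaccurate: in Figure~\ref{fig:cut-differences-no-downward}, $r$ is not on the path between the two upward movers. That two movers approach each other is simply your leader claim applied to a mover $p$, which leads itself. Finally, as in the paper, you should invoke continuity of $P$ to pass from the open step to its endpoints.
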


\begin{proof}
Let $\Theta(X):=\sum_{x\in X}\dist(\rho,x)$ be the total
distance of the online servers from the root, and set
$L(X):=-\sum_{a<b}\dist(\rho,\lca(x_a,x_b))$.
For each pair of servers,
\[
\dist(x_a,x_b)
=\dist(\rho,x_a)+\dist(\rho,x_b)
-2\dist(\rho,\lca(x_a,x_b)).
\]
Since each server appears in $k-1$ pairs, summing gives
$P=(k-1)\Theta+2L$. We first consider server positions
during the interior of the step.

\emph{Case $B=\emptyset$.}
Here $\Delta\Theta=-u\delta$.
The proof of Lemma~\ref{lem:sign}(a) shows that every
stationary server lies below exactly one upward mover.
Their lowest common ancestor is that mover, whose
distance from the root decreases by $\delta$, so
each such pair contributes $\delta$ to $\Delta L$.

All other pairs involving a mover have fixed lowest
common ancestors because their servers lie on separate
branches, while pairs of stationary servers are unchanged.
There are $k-u$ stationary servers, so
$\Delta L=(k-u)\delta$ and
\[
\Delta P=-(k-1)u\delta+2(k-u)\delta
       =[2k-(k+1)u]\delta.
\]

\emph{Case $B=\{j\}$.}
Here $\Delta\Theta=(1-u)\delta$.
Every upward mover lies below $j$, so its lowest common
ancestor with $j$ is $j$. Since $j$ moves away from
the root by $\delta$, these $u$ pairs contribute
$-u\delta$ to $\Delta L$.

Each stationary server below $j$ lies below exactly
one upward mover. Its pair with that mover contributes
$\delta$ to $\Delta L$, whereas its pair with $j$
contributes $-\delta$, so the two contributions cancel.
Pairs involving a stationary server outside the subtree
below $j$ have fixed lowest common ancestors, as do
all remaining pairs. Hence $\Delta L=-u\delta$ and
\[
\Delta P=(k-1)(1-u)\delta-2u\delta
       =[(k-1)-(k+1)u]\delta.
\]

Since $a=u$ in the first case and $a=u+1$ in the
second, both expressions equal $[2k-(k+1)a]\delta$.
Appendix~\ref{app:continuity} shows that $P$ is continuous
as the servers move, including when they meet or reach
vertices. Taking limits from within the step therefore
gives the stated identity for $\Delta P$ over the whole step.
\end{proof}

\Needspace{10\baselineskip}
The following proof uses the notation from Section~\ref{subsec:potential-analysis}.

\begin{theorem}[{\cite{CL91}}]
\label{thm:server}
DoubleCoverage is $k$-competitive for the $k$-server
problem on finite weighted trees.
\end{theorem}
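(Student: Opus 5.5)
We verify the per-request criterion~\eqref{eq:potential-criterion} with $\mu=k$ for the potential $\Phi_{\mathrm{srv}}=P(X)+kM(X,Y)$ from~\eqref{eq:server-potential}. This potential is nonnegative, and its initial value depends only on the tree and the initial configurations. Following Section~\ref{subsec:potential-analysis}, we split each request $t$ into an offline phase followed by an online phase and bound $\Delta\Phi_t^{\mathrm{off}}$ and $\ALG_t+\Delta\Phi_t^{\mathrm{on}}$ separately.

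\emph{Offline phase.} The offline algorithm moves a total distance $\OPT_t$ so that it has a server at $r$, while $X$ stays fixed. Then $P(X)$ does not change. By the Lipschitz remark after Lemma~\ref{lem:transport}, moving servers of one configuration a total distance $C$ changes $M$ by at most $C$. Hence $\Delta\Phi_t^{\mathrm{off}}\le k\,\OPT_t$.

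\emph{Online phase.} Now $Y$ is fixed and contains a server at $r$, so the hypotheses of Corollary~\ref{cor:matching} hold. By Lemma~\ref{lem:step-decomposition}, the \DC{} movement splits into finitely many elementary steps. Consider one step of duration $\delta$ with $a$ movers. Its online cost is $a\delta$. Lemma~\ref{lem:pair} gives $\Delta P=[2k-(k+1)a]\delta$, and Corollary~\ref{cor:matching} gives $k\,\Delta M\le k(a-2)\delta$. Adding these,
\[
a\delta+\Delta P+k\,\Delta M\le\bigl[a+2k-(k+1)a+ka-2k\bigr]\delta=0 .
\]
Summing over the steps gives $\ALG_t+\Delta\Phi_t^{\mathrm{on}}\le0$. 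Here we rely on the continuity statements of Appendix~\ref{app:continuity}, which ensure that the step-wise changes of $P$ and $M$ add up to the total change over the phase. Combining both phases yields $\ALG_t+\Delta\Phi_t\le k\,\OPT_t$, and summing over $t$ proves $k$-competitiveness.

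The argument is short because the work is done in the earlier results. The only step needing care is to check that Corollary~\ref{cor:matching} really applies on every step. This requires that the offline server at $r$ remains fixed throughout the online phase, which the phase ordering guarantees. It also requires that the case where an online server already occupies $r$ involves no movement and is therefore trivial. Finally, the algebra above should cancel exactly for both cases $B=\emptyset$ and $B=\{j\}$, which it does because Lemma~\ref{lem:pair} and Corollary~\ref{cor:matching} are both stated uniformly in $a$.
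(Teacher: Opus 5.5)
Your proposal is correct and follows the paper's proof essentially line for line. It uses the same potential $P+kM$, the same offline bound from the Lipschitz property of $M$, and the same per-step combination of Lemma~\ref{lem:pair} with Corollary~\ref{cor:matching}, which gives $a\delta+\Delta\Phi\le 0$ on each elementary step.
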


\begin{proof}
Using $\Phi=\Phi_{\mathrm{srv}}=P+kM$, it suffices to
prove $\ALG_t+\Delta\Phi_t\le k\OPT_t$ for every
request $t$. Fix request $t$ at $r$.

\emph{Offline movement.}
The online configuration remains fixed, so $P$ is
unchanged. Keeping the pairs of a minimum matching
fixed as the offline servers move shows that
$\Delta M\le\OPT_t$. Hence
$\Delta\Phi_t^{\mathrm{off}}=k\Delta M\le k\OPT_t$.

\emph{Online movement.}
The offline configuration now contains a server at
$r$ and remains fixed while \DC{} serves the request.
Consider an elementary step of duration $\delta$
with $a$ moving servers, and write $\Delta\Phi$ for
its potential change. Its movement cost is
$C_{\mathrm{DC}}=a\delta$.
Lemma~\ref{lem:pair} and Corollary~\ref{cor:matching}
give, respectively,
\[
\Delta P=[2k-(k+1)a]\delta,
\qquad
\Delta M\le(a-2)\delta.
\]
Substituting these bounds into the potential gives
\[
\Delta\Phi
=\Delta P+k\Delta M
\le \bigl[2k-(k+1)a+k(a-2)\bigr]\delta
=-a\delta
=-C_{\mathrm{DC}}.
\]
The step costs sum to $\ALG_t$, and the potential
changes sum to $\Delta\Phi_t^{\mathrm{on}}$, so
summing these inequalities gives
$\ALG_t+\Delta\Phi_t^{\mathrm{on}}\le0$.
Combining this with the offline bound yields
\[
\ALG_t+\Delta\Phi_t
=\Delta\Phi_t^{\mathrm{off}}
 +\bigl(\ALG_t+\Delta\Phi_t^{\mathrm{on}}\bigr)
\le k\OPT_t,
\]
as desired.
\end{proof}

\section{The \texorpdfstring{\(k\)}{k}-Taxi Problem on HSTs}
\label{sec:hst}

Our potential is a weighted sum of a normalized
version of the rank potential
from~\cite[Section~4.3]{BCN23}, the total distance
of the online taxis from the root, and the minimum
matching distance between the online and offline
configurations.

\subsection{HSTs}
\label{subsec:hst-definition}

Introduced by Bartal~\cite{Bar96}, hierarchically
well-separated trees are a class of rooted trees
used to approximate finite metrics through randomized
embeddings with logarithmic expected
distortion~\cite{FRT04}.

For \(\alpha > 1\), an
\(\alpha\)-\emph{hierarchically well-separated tree} (\(\alpha\)-HST) is a
rooted weighted tree \(T\) such that all leaves have the same combinatorial
depth \(d\), and edge lengths decrease by a factor of at least \(\alpha\)
when moving from the root toward a leaf.  The metric space consists of the leaves, so all pickups and destinations
are leaves.

Throughout this section, we use the layered HST model
of~\cite{BCN23}, in which all edges joining the same
two consecutive depths have equal length.
Every path from the root to a leaf therefore has
the same length $W$. For $x\in T$, define its
\emph{weighted height} by $h(x):=W-\dist(\rho,x)$,
which is the distance from $x$ to any descendant leaf.

A vertex at depth $d-\ell$ has
\emph{combinatorial height} $\ell$.
Write $\alpha_\ell$ for the common weighted height
of these vertices, so
$0=\alpha_0<\alpha_1<\cdots<\alpha_d=W$.
For $\ell=0,\ldots,d-1$, \emph{layer $\ell$}
consists of the interiors of the edges joining
combinatorial heights $\ell$ and $\ell+1$.
Each such edge has length
$\alpha_{\ell+1}-\alpha_\ell$, and a point $x$
lies in layer $\ell$ exactly when
$\alpha_\ell<h(x)<\alpha_{\ell+1}$.

\subsection{The HST Rank Potential}

We order the online taxis by weighted height,
$h_0\le h_1\le\cdots\le h_{k-1}$, with ranks $0,\ldots,k-1$.
Ties may be ordered arbitrarily.

For nonnegative integers \(m,\ell\), define
\[
    c_{m,\ell}
    :=\sum_{q=1}^{\min\{m,\ell\}}\binom{m}{q}.
\]

In particular, $c_{0,\ell}=c_{m,0}=0$.

For $\ell=0,\ldots,d-1$, let $\chi_\ell(h)$ be the length of the part of
$[\alpha_\ell,\alpha_{\ell+1}]$ below height $h$:
\begin{equation}
\label{eq:chi}
\chi_\ell(h):=
\begin{cases}
    0, & h\le \alpha_\ell,\\
    h-\alpha_\ell, & \alpha_\ell<h<\alpha_{\ell+1},\\
    \alpha_{\ell+1}-\alpha_\ell, & h\ge \alpha_{\ell+1}.
\end{cases}
\end{equation}

We define
\[
    \Psi(X)
    :=\sum_{i=0}^{k-1}\sum_{\ell=0}^{d-1}
      c_{i,\ell}\chi_\ell(h_i).
\]
The height term used in~\cite[Section~4.3]{BCN23}
is $\max\{\alpha_\ell,\min\{h,\alpha_{\ell+1}\}\}
=\chi_\ell(h)+\alpha_\ell$.
Thus our $\Psi$ differs from their rank potential
only by an additive constant.
Every term in $\Psi$ is nonnegative, and
$\chi_\ell(0)=0$ for every $\ell$, so $\Psi=0$
when all taxis are at leaves.

\begin{lemma}[{\cite[Lemma~11]{BCN23}}]
\label{lem:coefficient-recurrence}
For $m\ge 0$ and $\ell\ge 1$,
\[
c_{m,\ell}
=
m+\sum_{i=0}^{m-1} c_{i,\ell-1}.
\]
\end{lemma}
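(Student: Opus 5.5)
Writing $c_{i,\ell-1}=\sum_{q=1}^{\min\{i,\ell-1\}}\binom{i}{q}$, I would prove the recurrence directly from the definition of $c_{m,\ell}$ by exchanging the order of summation and then applying the hockey-stick identity. The terms with $q>i$ vanish, so $c_{i,\ell-1}=\sum_{q=1}^{\ell-1}\binom{i}{q}$ for every $i\ge0$. Summing over $i=0,\ldots,m-1$ and swapping sums gives
\[
\sum_{i=0}^{m-1}c_{i,\ell-1}
=\sum_{q=1}^{\ell-1}\sum_{i=0}^{m-1}\binom{i}{q}
=\sum_{q=1}^{\ell-1}\binom{m}{q+1},
\]
using the hockey-stick identity $\sum_{i=0}^{m-1}\binom{i}{q}=\binom{m}{q+1}$. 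This identity holds for all $m\ge0$ and $q\ge0$, and is trivially $0=0$ when $m\le q$.

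Reindexing with $p=q+1$ turns the right-hand side into $\sum_{p=2}^{\ell}\binom{m}{p}$. Adding $m=\binom{m}{1}$, which is the $p=1$ term, gives $\sum_{p=1}^{\ell}\binom{m}{p}$. Since $\binom{m}{p}=0$ for $p>m$, this equals $\sum_{p=1}^{\min\{m,\ell\}}\binom{m}{p}=c_{m,\ell}$. The hypothesis $\ell\ge1$ is used exactly here, to guarantee that the $p=1$ term lies in the range $1\le p\le\ell$.

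The only points needing care are the edge cases $m=0$ and $\ell=1$. When $m=0$, both sides are $0$. When $\ell=1$, the inner sums are empty and both sides equal $m$. Both cases are already covered by the general computation once the binomial coefficients are extended to vanish outside their natural range, so there is no real obstacle. The main thing to get right is the consistent use of the convention $\binom{i}{q}=0$ for $q>i$, which is what removes the $\min$ in the upper limit.
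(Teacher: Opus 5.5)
Your proof is correct, but it takes a different route from the paper. The paper argues by counting. $c_{m,\ell}$ is the number of nonempty subsets of $\{0,\ldots,m-1\}$ of size at most $\ell$. Classifying these subsets by their largest element $i$ leaves an arbitrary subset of $\{0,\ldots,i-1\}$ of size at most $\ell-1$, and there are $c_{i,\ell-1}+1$ of these. The $+1$ is the empty remainder, i.e.\ the singleton $\{i\}$. Summing over $i$ gives the recurrence in one line, handles the truncation at size $\ell$ automatically, and needs no conventions on binomial coefficients.

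You work algebraically instead: you remove the $\min$ via $\binom{i}{q}=0$ for $q>i$, exchange the sums, and apply the hockey-stick identity size by size. This makes explicit that the extra term $m$ is the $p=1$ term $\binom{m}{1}$, and it pinpoints exactly where $\ell\ge1$ is used.

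At bottom the two arguments are the same count organised differently. The standard proof of $\sum_{i=0}^{m-1}\binom{i}{q}=\binom{m}{q+1}$ is precisely the largest-element classification of $(q+1)$-subsets. The paper performs that classification for all sizes $1,\ldots,\ell$ at once, while you invoke it as a known identity one size at a time.
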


\begin{proof}
The case $m=0$ is immediate, with an empty sum. For $m\ge1$, count the
nonempty subsets of $\{0,\ldots,m-1\}$ of size at most $\ell$ by their
largest element $i$. The remaining elements can be chosen in
$c_{i,\ell-1}+1$ ways, including the empty set, giving
\[
c_{m,\ell}=\sum_{i=0}^{m-1}(c_{i,\ell-1}+1)
=m+\sum_{i=0}^{m-1}c_{i,\ell-1}.
\qedhere
\]
\end{proof}

We subdivide each elementary step whenever two
taxis exchange their order by height and call
the resulting intervals \emph{rank-refined steps}.
Throughout an elementary step, each labelled
taxi's height increases at unit speed, decreases
at unit speed, or remains constant.
Two such heights therefore either agree throughout
the step or coincide at most once, so the
refinement is finite.
We fix an arbitrary order among taxis whose
heights agree throughout a step, so every taxi
has a fixed rank in the step interior.
Appendix~\ref{app:continuity} shows that $\Psi$
is continuous, including when taxis exchange ranks
or reach vertices. Taking limits at the step endpoints
therefore extends the bounds proved in the step
interior to the whole step.

\begin{lemma}
\label{lem:rank-potential-change}
Consider a rank-refined step of duration \(\delta\).  Let \(U\) be the set
of taxis moving upward, let \(u=|U|\), and let \(B\) be the set of
taxis moving downward.  Then
\[
    \Delta\Psi\le
    \begin{cases}
        (c_{k,d}-u)\delta, & B=\varnothing,\\[2mm]
        -u\delta,         & B=\{j\}.
    \end{cases}
\]
\end{lemma}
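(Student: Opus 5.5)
The plan is to compute \(\Psi\) exactly in the interior of a rank-refined step, and then bound the resulting coefficient sum with the recurrence of Lemma~\ref{lem:coefficient-recurrence}.

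\emph{Rate of change.} In the interior of the step every taxi has a fixed rank, written \(r_i\). Every mover stays in the interior of one edge, so it lies in a fixed layer \(\ell_i\) with \(\alpha_{\ell_i}<h_i<\alpha_{\ell_i+1}\). Upward motion raises \(h_i\) at unit speed, and downward motion lowers it. From~\eqref{eq:chi}, \(\chi_\ell(h)\) has derivative \(1\) when \(\ell=\ell_i\) and \(0\) in every other layer. Hence
\[
\Delta\Psi=\Bigl(\sum_{i\in U}c_{r_i,\ell_i}-\sum_{j\in B}c_{r_j,\ell_j}\Bigr)\delta ,
\]
and the statement reduces to combinatorial bounds on these coefficients. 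Two facts will be used throughout. First, \(c_{m,\ell}\) is nondecreasing in \(\ell\). Second, distinct taxis have distinct ranks, including taxis tied in height, because of the fixed tie order.

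\emph{Case \(B=\varnothing\).} It suffices to show \(\sum_{i\in U}(c_{r_i,\ell_i}+1)\le c_{k,d}\). Every layer satisfies \(\ell_i\le d-1\), so monotonicity gives \(c_{r_i,\ell_i}\le c_{r_i,d-1}\). The ranks \(r_i\) are distinct elements of \(\{0,\dots,k-1\}\), and each summand \(c_{r,d-1}+1\) is nonnegative. Therefore
\[
\sum_{i\in U}(c_{r_i,d-1}+1)\le\sum_{r=0}^{k-1}(c_{r,d-1}+1)=c_{k,d},
\]
where the last equality is Lemma~\ref{lem:coefficient-recurrence} with \(m=k\) and \(\ell=d\ge1\).

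\emph{Case \(B=\{j\}\).} We must show \(\sum_{i\in U}(c_{r_i,\ell_i}+1)\le c_{r_j,\ell_j}\). If \(u=0\) this holds trivially, since \(c\ge0\).

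Otherwise, Lemma~\ref{lem:geometry} shows that each upward path joins \([\rho,r]\) at an original vertex \(v\) strictly below every point of \(S_j\). The vertex \(v\) lies at or below the lower endpoint of \(j\)'s edge, so \(h(v)\le\alpha_{\ell_j}\). Each upward mover \(i\) is strictly below \(v\), so \(h_i<\alpha_{\ell_j}\). This gives \(\ell_i\le\ell_j-1\), and in particular \(\ell_j\ge1\). It also gives \(h_i<h_j\) strictly, so \(r_i<r_j\) regardless of how ties are broken.

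Combining distinct ranks below \(r_j\), monotonicity, and Lemma~\ref{lem:coefficient-recurrence} with \(m=r_j\) and \(\ell=\ell_j\):
\[
\sum_{i\in U}(c_{r_i,\ell_i}+1)\le\sum_{r=0}^{r_j-1}(c_{r,\ell_j-1}+1)=c_{r_j,\ell_j}.
\]

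\emph{Main obstacle.} The delicate step is the geometric claim \(\ell_i\le\ell_j-1\). It depends on the joining vertex being an original vertex at or below the lower end of \(j\)'s edge, which comes from the proof of Lemma~\ref{lem:geometry}, and on the layered model, in which height determines the layer. Finally, Appendix~\ref{app:continuity} gives continuity of \(\Psi\), which extends the interior computation to the step endpoints.
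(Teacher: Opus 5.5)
Your proof is correct and takes the same route as the paper. It uses the same rate computation for $\Psi$ with fixed ranks, the recurrence of Lemma~\ref{lem:coefficient-recurrence} with $m=k$, $\ell=d$ in the first case, and the geometric chain $h_i<h(v_i)\le\alpha_{\ell_j}<h_j$ to get $r_i<r_j$ and $\ell_i\le\ell_j-1$ in the second; the only cosmetic difference is that you handle the degenerate case via $u=0$ and deduce $\ell_j\ge1$ from $u\ge1$, whereas the paper first treats $\ell_j=0$ and shows it forces $u=0$.
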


\begin{proof}
Label each taxi by its rank during the step, and let
$\ell_i$ be the layer containing moving taxi $i$.
For this taxi, only $\chi_{\ell_i}(h_i)$ changes,
increasing by $\delta$ for upward movement and decreasing
by $\delta$ for downward movement. Its contribution to
$\Delta\Psi$ is therefore $c_{i,\ell_i}\delta$ or
$-c_{i,\ell_i}\delta$, respectively, while stationary
taxis contribute zero.

\emph{Case $B=\emptyset$.}
Here $u\le k$ and $\ell_i\le d-1$ for every mover.
The coefficients are nonnegative and nondecreasing
in their second index, so
\[
\begin{aligned}
u+\frac{\Delta\Psi}{\delta}
&=u+\sum_{i\in U}c_{i,\ell_i}\\
&\le k+\sum_{i=0}^{k-1}c_{i,d-1}
=c_{k,d},
\end{aligned}
\]
where the last equality follows from
Lemma~\ref{lem:coefficient-recurrence}.
Rearranging gives $\Delta\Psi\le(c_{k,d}-u)\delta$.

\emph{Case $B=\{j\}$.}
If $\ell_j=0$, taxi $j$ lies on the edge leading
to the request leaf. No branch joins the path
between this taxi and the request, so $u=0$ and
$\Delta\Psi=-c_{j,0}\delta=0$.

It remains to consider $\ell_j\ge1$.
For each upward mover $i$, let $v_i$ be the vertex
where its path to the request joins $[\rho,r]$.
By Lemma~\ref{lem:geometry}, $v_i$ lies below
taxi $j$, while taxi $i$ lies below $v_i$ throughout
the interior of the step. Every vertex below a taxi
in layer $\ell_j$ has height at most $\alpha_{\ell_j}$,
so $h_i<h(v_i)\le\alpha_{\ell_j}<h_j$.
It follows that $i<j$ and $\ell_i\le\ell_j-1$
for every $i\in U$, which also implies $u\le j$.
Using these inequalities and the nonnegativity
and monotonicity of the coefficients, we obtain
\[
\begin{aligned}
u+\frac{\Delta\Psi}{\delta}
&=u-c_{j,\ell_j}
  +\sum_{i\in U}c_{i,\ell_i}\\
&\le j-c_{j,\ell_j}
  +\sum_{i=0}^{j-1}c_{i,\ell_j-1}\\
&=0,
\end{aligned}
\]
where the last equality is
Lemma~\ref{lem:coefficient-recurrence} with
$m=j$ and $\ell=\ell_j$.
Thus $\Delta\Psi\le-u\delta$.
\end{proof}

\subsection{Finalizing the Analysis}
\label{subsec:hst-online}

We combine the rank-potential bound with the matching estimate
to obtain a potential whose decrease pays for online movement.
With $\Theta(X)=\sum_{x\in X}\dist(\rho,x)$ as in
Section~\ref{sec:server}, we define the potential
\begin{equation}
\label{eq:hst-potential}
    \Phi(X,Y)
    :=(c_{k,d}-1)\Theta(X)+2\Psi(X)+c_{k,d}M(X,Y).
\end{equation}
Clearly all three terms are nonnegative.
We use the notation from Section~\ref{subsec:potential-analysis} below.
\begin{lemma}
\label{lem:hst-online}
Fix request $t$ with pickup $r$.
Suppose that the offline configuration $Y$ remains fixed
and contains a taxi at $r$ while \DC{} serves the pickup.
Then
\[
    \ALG_t+\Delta\Phi_t^{\mathrm{on}}\le0.
\]
\end{lemma}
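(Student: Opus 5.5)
Decompose the online movement for request $t$ into rank-refined steps. These are elementary steps (Lemma~\ref{lem:step-decomposition}) further subdivided at rank exchanges. On each step of duration $\delta$ with $a=u+|B|$ movers, the cost is $a\delta$, so it suffices to show $\Delta\Phi\le -a\delta$ per step. Summing over steps then gives $\ALG_t+\Delta\Phi_t^{\mathrm{on}}\le0$. Continuity of $\Theta$, $\Psi$ and $M$ (Appendix~\ref{app:continuity}) lets interior computations extend to step endpoints. The three ingredients are:
\begin{itemize}
\item $\Delta\Theta=(|B|-u)\delta$;
\item Lemma~\ref{lem:rank-potential-change} for $\Delta\Psi$;
\item Corollary~\ref{cor:matching} for $\Delta M$, which applies because $Y$ is fixed and has a taxi at $r$.
\end{itemize}
The hypotheses of the corollary hold on each rank-refined step, since such a step is a sub-interval of an elementary step.

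\emph{Case $B=\emptyset$.} Here $a=u$ and $\Delta\Theta=-u\delta$. Lemma~\ref{lem:rank-potential-change} gives $\Delta\Psi\le(c_{k,d}-u)\delta$, and Corollary~\ref{cor:matching} gives $\Delta M\le(u-2)\delta$. Hence
\[
\frac{\Delta\Phi}{\delta}\le -(c_{k,d}-1)u+2(c_{k,d}-u)+c_{k,d}(u-2)=-u,
\]
as needed.

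\emph{Case $B=\{j\}$.} Here $a=u+1$ and $\Delta\Theta=(1-u)\delta$. Lemma~\ref{lem:rank-potential-change} gives $\Delta\Psi\le-u\delta$, and Corollary~\ref{cor:matching} gives $\Delta M\le(u-1)\delta$. Hence
\[
\frac{\Delta\Phi}{\delta}\le (c_{k,d}-1)(1-u)-2u+c_{k,d}(u-1)=-(u+1)=-a,
\]
as needed.

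The only subtle point is bookkeeping. The corollary and the lemma are stated for slightly different step notions, so we must check that subdividing an elementary step preserves the conditions of Definition~\ref{def:elementary-step}. This holds because each condition is inherited by sub-intervals. The potential changes then telescope over the finite refinement. Since the coefficient algebra closes exactly in both cases, no further estimate is needed; the main work is already done in Lemma~\ref{lem:rank-potential-change}.
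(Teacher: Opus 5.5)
Your proof is correct and matches the paper's argument: the same per-rank-refined-step case split on $B$, the same three ingredients ($\Delta\Theta$, Lemma~\ref{lem:rank-potential-change}, Corollary~\ref{cor:matching}), and the same coefficient algebra that cancels exactly in both cases. Your explicit check that rank-refined steps inherit the conditions of Definition~\ref{def:elementary-step}, so that the corollary applies to them, is a bookkeeping point the paper leaves implicit.
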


\begin{proof}
Consider a rank-refined step of duration $\delta$,
and write $C_{\mathrm{DC}}$ for its movement cost
and $\Delta\Phi$ for its potential change.
We use Lemma~\ref{lem:rank-potential-change} and
Corollary~\ref{cor:matching} to bound $\Delta\Psi$
and $\Delta M$, respectively.

\emph{Case $B=\emptyset$.}
Here $C_{\mathrm{DC}}=u\delta$ and
$\Delta\Theta=-u\delta$, while the two estimates imply
$\Delta\Psi\le(c_{k,d}-u)\delta$ and
$\Delta M\le(u-2)\delta$.
Substituting into~\eqref{eq:hst-potential} yields
\[
C_{\mathrm{DC}}+\Delta\Phi
\le
\bigl[u-(c_{k,d}-1)u
+2(c_{k,d}-u)+c_{k,d}(u-2)\bigr]\delta
=0.
\]

\emph{Case $B=\{j\}$.}
Here $C_{\mathrm{DC}}=(u+1)\delta$ and
$\Delta\Theta=(1-u)\delta$.
The corresponding bounds are $\Delta\Psi\le-u\delta$
and $\Delta M\le(u-1)\delta$, so
\[
C_{\mathrm{DC}}+\Delta\Phi
\le
\bigl[u+1+(c_{k,d}-1)(1-u)
-2u+c_{k,d}(u-1)\bigr]\delta
=0.
\]

The step costs sum to $\ALG_t$ and the potential
changes to $\Delta\Phi_t^{\mathrm{on}}$, so adding
these inequalities proves the claim.
\end{proof}

\begin{theorem}[{\cite{BCN23}}]
\label{thm:hst}
On a depth-\(d\) HST, \DC{} is
\(c_{k,d}\)-competitive for the hard \(k\)-taxi problem, where
\[
    c_{k,d}
    = \sum_{h=1}^{\min\{k,d\}}\binom{k}{h}.
\]
In particular, if \(d\ge k\), then \(c_{k,d}=2^k-1\).
\end{theorem}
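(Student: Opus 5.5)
We will verify the criterion~\eqref{eq:potential-criterion} with $\mu=c_{k,d}$ for the potential $\Phi$ of~\eqref{eq:hst-potential}. For each request $t=(r,s)$ we split the change $\Delta\Phi_t$ into the offline, online and relocation phases of Section~\ref{subsec:potential-analysis}. We then bound each phase separately. The closing identity $c_{k,d}=2^k-1$ for $d\ge k$ is immediate, since then $\min\{k,d\}=k$ and $\sum_{h=1}^k\binom{k}{h}=2^k-1$. Nonnegativity of $\Phi$ has already been noted. Its initial value depends only on the tree and the initial configurations, so summing the per-request inequalities gives $\ALG\le c_{k,d}\OPT+\Phi_0$.

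\emph{Offline phase.} While the offline algorithm moves to the pickup $r$, the online configuration $X$ is fixed, so $\Theta(X)$ and $\Psi(X)$ do not change. By the matching-stability remark after Lemma~\ref{lem:transport}, $M$ increases by at most the offline movement, which is $\OPT_t$ in the hard model. Here we use that only the empty movement to the pickup is charged: any additional offline repositioning can be folded into this charged phase without loss. Hence $\Delta\Phi_t^{\mathrm{off}}\le c_{k,d}\OPT_t$.

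\emph{Online phase.} After the offline phase $Y$ contains a taxi at $r$ and is held fixed. Lemma~\ref{lem:hst-online} then gives $\ALG_t+\Delta\Phi_t^{\mathrm{on}}\le0$ directly.

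\emph{Relocation phase.} This is the step that needs care, because $\Theta$ and $\Psi$ depend on $X$ alone and are not obviously invariant. Relocating both serving taxis from $r$ to $s$ leaves every cut difference $D(z)$ unchanged, so $M$ is unchanged by the remark after Lemma~\ref{lem:transport}. In the HST setting $r$ and $s$ are both leaves. Thus the online taxi moves from one point of weighted height $0$ and root distance $W$ to another such point. Its distance from the root stays $W$, so $\Theta$ is unchanged. Its height stays $0$, so the multiset of heights $\{h_i\}$ is unchanged. Since $\Psi$ depends only on the sorted heights, and ties are harmless because equal heights contribute identically after reordering, $\Psi$ is unchanged as well. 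Hence $\Delta\Phi_t^{\mathrm{rel}}=0$.

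The main obstacle is precisely this invariance of the online-only terms. It relies on the HST convention that the metric space is the leaf set and all leaves share the same root distance $W$; the argument would fail on a general tree, where $r$ and $s$ can have different heights. Adding the three phases gives $\ALG_t+\Delta\Phi_t\le c_{k,d}\OPT_t$, which completes the proof.
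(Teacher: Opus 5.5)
Your proof is correct and matches the paper's argument essentially verbatim: the same potential $(c_{k,d}-1)\Theta+2\Psi+c_{k,d}M$, the same offline/online/relocation split, Lemma~\ref{lem:hst-online} for the online phase, and invariance of $M$, $\Theta$ and the height multiset when a taxi is relocated between leaves. Your aside about ``folding'' offline repositioning into the charged phase is unnecessary, since every offline move other than the free passenger transport is charged and is therefore already counted in $\OPT_t$.
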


\begin{proof}
By the criterion in Section~\ref{subsec:potential-analysis},
it suffices to prove
$\ALG_t+\Delta\Phi_t\le c_{k,d}\OPT_t$
for every request $t$.
Fix request $t$, with pickup $r$ and destination $s$,
and consider its three phases.

\emph{Offline movement.}
The online configuration remains fixed, so $\Theta$
and $\Psi$ are unchanged. The bound $\Delta M\le\OPT_t$
therefore implies
$\Delta\Phi_t^{\mathrm{off}}
=c_{k,d}\Delta M\le c_{k,d}\OPT_t$.

\emph{Online movement.}
The offline configuration contains a taxi at $r$
and remains fixed while \DC{} serves the pickup.
Lemma~\ref{lem:hst-online} then yields
$\ALG_t+\Delta\Phi_t^{\mathrm{on}}\le0$.

\emph{Free relocation.}
Both configurations relocate their serving taxi
from $r$ to $s$, preserving every cut difference
and hence $M$.
Both leaves lie at distance $W$ from the root
and have height zero, so $\Theta$ and the multiset
of online taxi heights remain unchanged.
This also preserves $\Psi$, and therefore
$\Delta\Phi_t^{\mathrm{rel}}=0$.

Adding the bounds for the three phases, we obtain
\[
\begin{aligned}
\ALG_t+\Delta\Phi_t
&=\Delta\Phi_t^{\mathrm{off}}
  +\bigl(\ALG_t+\Delta\Phi_t^{\mathrm{on}}\bigr)
  +\Delta\Phi_t^{\mathrm{rel}}\\
&\le c_{k,d}\OPT_t,
\end{aligned}
\]
as required.
\end{proof}

\section{The \texorpdfstring{\(k\)}{k}-Taxi Problem on General Weighted Trees}
\label{sec:tree}

On a general weighted tree, the pickup and destination
may lie at different distances from the root, so
$\Theta$ need not be preserved by free relocation.
We therefore use a potential that depends only on
the cut differences, which remain unchanged in this phase.

Let $k\ge2$, and let $T$ be a finite rooted weighted
tree of combinatorial depth $d\ge1$.
Pickups and destinations are vertices of the original
tree and may be internal vertices.
For a point $z$ in the interior of an original edge,
let $\ell(z)$ be the combinatorial depth of its
endpoint farther from the root.
We also call $\ell(z)$ the depth of the edge,
so $\ell(z)\in\{1,\ldots,d\}$.
This index refers to the original tree and is unchanged
by any subdivision used in the analysis.

\subsection{Coefficients and the Potential Function}

We use the coefficients $m_i<0<M_i$, indexed by edge depth,
from~\cite[Section~5.1]{BCN23}.
At each point of an edge of depth $i$, the potential
assigns weight $M_i$ to each excess online taxi below
that point and weight $-m_i$ to each excess offline taxi.
For $i=1,\ldots,d$, define

\begin{align}
m_i
&=
\begin{cases}
-2(d-i)-1, & k=2,\\[2pt]
\dfrac{-2(k-1)^{d-i+1}+k}{k-2}, & k\ge3,
\end{cases}
\label{eq:mi}\\
M_i
&=
\begin{cases}
2(d+i)-1, & k=2,\\[2pt]
\dfrac{2k(k-1)^d-2(k-1)^{d-i+1}-k}{k-2}, & k\ge3.
\end{cases}
\label{eq:Mi}
\end{align}
Here $M_i$ denotes a coefficient, whereas $M(X,Y)$
denotes the minimum matching distance.
The following properties, proved
in~\cite[Section~5.1]{BCN23}, will be used to bound
the potential changes during online and offline movement.

\begin{lemma}[{\cite{BCN23}}]
\label{lem:tree-coeff}
Both coefficient sequences are strictly increasing, with
$m_d=-1$ and $M_1>0$, and the difference $M_i-m_i$ is independent
of $i$. For every integer $1\le t\le k$,
\begin{align}
M_1+(t-1)m_1
&\ge t,
\label{eq:coeff-up}\\
(t-1)m_{i+1}-m_i
&\ge t
\qquad (1\le i<d).
\label{eq:coeff-down}
\end{align}
Moreover,
\begin{equation}
\label{eq:tree-factor}
\max_{1\le i\le d}\{M_i,|m_i|\}
=M_d
=
\begin{cases}
4d-1, & k=2,\\[2pt]
\dfrac{2k(k-1)^d-3k+2}{k-2}, & k\ge3.
\end{cases}
\end{equation}
\end{lemma}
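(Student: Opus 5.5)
The statement consists of explicit identities and inequalities for the closed-form coefficients \eqref{eq:mi} and \eqref{eq:Mi}. I will verify them by direct computation, treating $k=2$ and $k\ge3$ separately. For $k\ge3$, set $q:=k-1\ge2$ and write everything over the positive denominator $k-2$, so that $m_i=(k-2q^{d-i+1})/(k-2)$.

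\emph{Monotonicity and basic values.} As $i$ increases, the exponent $d-i+1$ decreases. Hence $q^{d-i+1}$ strictly decreases and $m_i$ strictly increases. Next, $M_i-m_i=(2kq^d-2k)/(k-2)$, which does not involve $i$, so $M_i=m_i+\text{const}$ is also strictly increasing. At $i=d$ we get $m_d=(k-2q)/(k-2)=(2-k)/(k-2)=-1$. At $i=1$ we get $M_1=(2kq^d-2q^d-k)/(k-2)=(2q^{d+1}-k)/(k-2)>0$, because $2q^{d+1}\ge2q^2>k$.

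\emph{The two inequalities.} Multiply both sides by $k-2$. For \eqref{eq:coeff-up}, the left side becomes
\[
2kq^d-2q^d-k+(t-1)(k-2q^d)=2q^d(k-t)+k(t-2),
\]
so the inequality is equivalent to $2q^d(k-t)\ge2(k-t)$. This holds since $t\le k$ and $q^d\ge1$. For \eqref{eq:coeff-down}, put $e=d-i\ge1$. Then $(k-2)\bigl[(t-1)m_{i+1}-m_i\bigr]$ equals
\[
(t-1)(k-2q^e)-(k-2q^{e+1})=k(t-2)+2q^e(q-t+1)=k(t-2)+2q^e(k-t).
\]
This reduces to the same inequality $2q^e(k-t)\ge2(k-t)$. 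The only delicate point is keeping track of signs, since $m_i<0$. Performing this algebraic reduction cleanly is the main, if modest, obstacle.

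\emph{The maximum.} Since $M_i$ is increasing, the largest of the $M_i$ is $M_d=(2kq^d-2q-k)/(k-2)=(2kq^d-3k+2)/(k-2)$. Since $m_i<0$ and $m_i$ is increasing, the largest $|m_i|$ is $|m_1|=(2q^d-k)/(k-2)$. This is at most $M_d$ because $2kq^d-3k+2-(2q^d-k)=2q^d(k-1)-2(k-1)\ge0$. This gives \eqref{eq:tree-factor}.

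\emph{The case $k=2$.} Here $m_i=-2(d-i)-1$ and $M_i=2(d+i)-1$ are both strictly increasing, with constant difference $M_i-m_i=4d$. Also $m_d=-1$ and $M_1=2d+1>0$. Only $t\in\{1,2\}$ occur. In \eqref{eq:coeff-up} the left side is $2d+1\ge1$ for $t=1$ and $2\ge2$ for $t=2$. In \eqref{eq:coeff-down} it is $2(d-i)+1\ge1$ for $t=1$ and exactly $2$ for $t=2$. Finally, the maximum is $M_d=4d-1$, which dominates $|m_1|=2d-1$.
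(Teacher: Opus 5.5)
Your computation is correct throughout. Both inequalities do reduce to $2q^{e}(k-t)\ge 2(k-t)$ after multiplying by $k-2$, the comparison $M_d-|m_1|=2(k-1)(q^d-1)/(k-2)\ge0$ holds, and the $k=2$ checks are right.

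The paper does not prove this lemma; it only cites \cite[Section~5.1]{BCN23}. Your direct verification from the closed forms \eqref{eq:mi} and \eqref{eq:Mi} is therefore a self-contained substitute for that citation rather than a variant of an argument given in the paper.

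Your factorization also shows that both \eqref{eq:coeff-up} and \eqref{eq:coeff-down} hold with equality at $t=k$. This means the coefficients are exactly those determined by the following conditions:
\begin{itemize}
\item $m_d=-1$;
\item the recurrence $m_i=(k-1)m_{i+1}-k$;
\item $M_1=k-(k-1)m_1$;
\item the constant difference $M_i-m_i$.
\end{itemize}
This explains their form, a point the citation leaves implicit.
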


For each depth $i$, let $\varphi_i(q)$ be the weighted value
assigned to a cut difference $q$:
\[
\varphi_i(q)
=
\begin{cases}
M_iq, & q\ge0,\\
m_iq, & q<0,
\end{cases}
\qquad q\in\mathbb{Z}.
\]
Integrating these weighted cut differences over the tree gives
our potential:
\begin{equation}
\label{eq:tree-potential}
\Phi_{\mathrm{tree}}(X,Y)
=
\int_T \varphi_{\ell(z)}(D(z))\,dz.
\end{equation}
The inequalities $M_i>0$ and $m_i<0$ imply
$\varphi_i(q)\ge0$ for every integer $q$, so the potential
is nonnegative. Relocating one taxi from $r$ to $s$ in each
configuration preserves every cut difference and therefore
leaves the potential unchanged.

\begin{samepage}
During an elementary step, an online taxi moving upward
changes the cut difference from $q$ to $q-1$ at each
point it passes, while a taxi moving downward changes
it from $q$ to $q+1$.
On an edge of depth $i$, the contribution to
$\Delta\Phi_{\mathrm{tree}}$ per unit distance travelled
is therefore $\varphi_i(q-1)-\varphi_i(q)$ or
$\varphi_i(q+1)-\varphi_i(q)$, respectively:
\begin{equation}
\label{eq:tree-slopes}
\begin{array}{c|c|c}
\text{direction}
& \text{cut difference before the step}
& \text{contribution to $\Delta\Phi_{\mathrm{tree}}$ per unit length}\\
\hline
\text{upward}   & q\ge1  & -M_i\\
\text{upward}   & q\le0  & -m_i\\
\text{downward} & q\ge0  & M_i\\
\text{downward} & q\le-1 & m_i
\end{array}
\end{equation}
\end{samepage}
By Lemma~\ref{lem:geometry}, the segments $S_i$
are pairwise disjoint, so we may sum the contributions.

Moving one taxi in either configuration across a cut changes \(D(z)\) by
\(1\) or \(-1\). By Lemma~\ref{lem:tree-coeff}, \(M_i\le M_d\) and
\(|m_i|\le M_d\) for every \(i\), so the integrand changes by at most
\(M_d\) in absolute value at each crossed cut. Integrating along the movement
paths shows that, if one configuration moves a total distance \(C\) while
the other is fixed,
\begin{equation}
\label{eq:tree-motion-bound}
|\Delta\Phi_{\mathrm{tree}}|\le M_d C.
\end{equation}

\subsection{Finalizing the Analysis}

Throughout this subsection, write $\Phi=\Phi_{\mathrm{tree}}$
and use the notation from
Section~\ref{subsec:potential-analysis}.

\begin{lemma}
\label{lem:tree-online}
Fix request $t$ with pickup $r$.
Suppose that the offline configuration $Y$ remains fixed
and contains a taxi at $r$ while \DC{} serves the pickup.
Then
\[
    \ALG_t+\Delta\Phi_t^{\mathrm{on}}\le0.
\]
\end{lemma}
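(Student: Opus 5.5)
The plan is to prove, for every elementary step of duration $\delta$ with $a=u+|B|$ movers, the per-step inequality
\[
C_{\mathrm{DC}}+\Delta\Phi\le0,\qquad C_{\mathrm{DC}}=a\delta,
\]
and then sum over the finite partition from Lemma~\ref{lem:step-decomposition}, exactly as in the proof of Lemma~\ref{lem:hst-online}. Since the segments $S_i$ are disjoint (Lemma~\ref{lem:geometry}), $\Delta\Phi$ is the sum of the per-mover contributions read off from table~\eqref{eq:tree-slopes}, each multiplied by $\delta$. The edge depth is constant on each $S_i$ because a mover stays inside one original edge. Step endpoints are handled by the continuity argument of Appendix~\ref{app:continuity}. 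Lemma~\ref{lem:sign} will tell us which mover picks up the favourable sign, and the coefficient inequalities of Lemma~\ref{lem:tree-coeff} will close each case. Since $m_i$ is increasing, $-m_i$ is decreasing in $i$, so the worst-case upward contribution $-m_\ell$ is largest at the smallest possible depth $\ell$.

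\emph{Case $B=\emptyset$.} Here $1\le u\le k$. By Lemma~\ref{lem:sign}(a), some upward mover $i$ has $D(A_i)\ge1$. That mover contributes $-M_{\ell_i}\delta\le -M_1\delta$, using that $M$ is increasing. Every other upward mover contributes either $-M_\ell$ or $-m_\ell$, hence at most $-m_\ell\le -m_1$. So
\[
u\delta+\Delta\Phi\le\bigl[u-M_1-(u-1)m_1\bigr]\delta\le0
\]
by~\eqref{eq:coeff-up} with $t=u$.

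\emph{Case $B=\{j\}$.} Let $i$ be the depth of $j$'s edge. By the proof of Lemma~\ref{lem:geometry}, each upward mover's path joins $[\rho,r]$ at a vertex strictly below $S_j$, i.e.\ at depth $\ge i$. Hence every upward mover lies on an edge of depth $\ge i+1$.

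If $i=d$, the lower endpoint of $j$'s edge has no children, so $u=0$. Moreover $r\in A$ and $A$ contains no other online taxi, while $Y$ has a taxi at $r$; therefore $D(A)\le-1$. Then $j$ contributes $m_d\delta=-\delta$, which exactly pays the cost $\delta$.

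If $i<d$, every upward mover contributes at most $-m_{i+1}$ per unit length, and $t:=u+1\le k$. There are two subcases, following Lemma~\ref{lem:sign}(b).
\begin{itemize}
\item If $D(A)\le-1$, then $j$ contributes $m_i\delta$, and we need $(u+1)+m_i-u\,m_{i+1}\le0$. This is~\eqref{eq:coeff-down} with $t=u+1$.
\item If $D(A)\ge0$, then $j$ contributes $+M_i\delta$. Lemma~\ref{lem:sign}(b) supplies an upward mover contributing $-M_\ell\le -M_{i+1}$, and the remaining $u-1$ upward movers contribute at most $-m_{i+1}$ each. Using that $M_{i+1}-M_i=m_{i+1}-m_i$ (the difference $M_i-m_i$ is constant), the bound becomes
\[
(u+1)+M_i-M_{i+1}-(u-1)m_{i+1}=(u+1)+m_i-u\,m_{i+1},
\]
which is the same quantity as in the first subcase. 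So~\eqref{eq:coeff-down} again gives $\le0$.
\end{itemize}

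The main obstacle I expect is the depth bookkeeping in the downward case. One has to justify that every upward mover sits on an edge of depth at least $i+1$, so that the monotonicity of $m$ and $M$ lets us replace their coefficients by $-m_{i+1}$ and $-M_{i+1}$. One also has to dispose of the boundary case $i=d$ separately, since~\eqref{eq:coeff-down} is only available for $i<d$. Once these per-step bounds hold, summing over steps yields $\ALG_t+\Delta\Phi_t^{\mathrm{on}}\le0$.
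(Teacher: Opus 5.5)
Your proposal is correct and matches the paper's proof essentially step for step. Both use the per-step bound from table~\eqref{eq:tree-slopes}, the favourable mover supplied by Lemma~\ref{lem:sign}, the depth comparison placing every upward mover on an edge of depth at least $i+1$, and inequalities \eqref{eq:coeff-up} and \eqref{eq:coeff-down} with $t=u$ and $t=u+1$. The only variation is in the boundary case $i=d$, where you derive $D(A)\le-1$ directly, whereas the paper reads it off Lemma~\ref{lem:sign}(b) with $u=0$. You assert without proof that $A$ contains no online taxi, but this is exactly what the proof of Lemma~\ref{lem:sign}(b) gives (or simple obstruction), so either route works.
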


\begin{proof}
Consider an elementary step of duration $\delta$,
and write $C_{\mathrm{DC}}$ for its movement cost
and $\Delta\Phi$ for its potential change.
We use the notation of Section~\ref{sec:steps},
with $q_i=D(z_i)$ evaluated before the step.
Each mover travels distance $\delta$, so its contribution
to $\Delta\Phi$ is $\delta$ times the corresponding
entry in~\eqref{eq:tree-slopes}.

\emph{Case $B=\emptyset$.}
Here $C_{\mathrm{DC}}=u\delta$ and $1\le u\le k$.
By Lemma~\ref{lem:sign}, some upward mover $i$
has $q_i\ge1$.
By~\eqref{eq:tree-slopes} and the monotonicity of the
coefficients, this mover contributes at most
$-M_1\delta$, while every other mover contributes
at most $-m_1\delta$.
Therefore
\[
C_{\mathrm{DC}}+\Delta\Phi
\le\bigl[u-M_1-(u-1)m_1\bigr]\delta
\le0,
\]
where the last inequality follows from
\eqref{eq:coeff-up} with $t=u$.

\emph{Case $B=\{j\}$.}
Here $C_{\mathrm{DC}}=(u+1)\delta$.
Let $h$ be the depth of the edge containing
the downward mover $j$, and let $v$ be its
endpoint farther from the root.

By Lemma~\ref{lem:geometry}, the path of each upward
mover joins $[\rho,r]$ at a point that remains below
taxi $j$ throughout the interior of the step.
This joining point is an original vertex,
either a branching vertex or the request $r$,
and is therefore $v$ or a descendant of $v$.
The upward mover lies strictly below this point,
so its edge has depth at least $h+1$.

If $h=d$, there are no deeper edges and hence
no upward movers.
Lemma~\ref{lem:sign} then implies $q_j\le-1$,
so $\Delta\Phi=m_d\delta=-\delta=-C_{\mathrm{DC}}$.

For $h<d$, we consider the sign of $q_j$.
If $q_j\le-1$, the downward mover contributes
$m_h\delta$. By the depth comparison and the
monotonicity of the coefficients, each upward mover
contributes at most $-m_{h+1}\delta$.
Thus
\[
    \Delta\Phi\le(m_h-u m_{h+1})\delta.
\]
If $q_j\ge0$, Lemma~\ref{lem:sign} shows that there is an
upward mover $i$ with $q_i\ge1$.
This taxi contributes at most $-M_{h+1}\delta$,
while the downward mover contributes $M_h\delta$
and every other upward mover contributes at most
$-m_{h+1}\delta$.
Consequently,
\[
\begin{aligned}
\Delta\Phi
&\le
\bigl[M_h-M_{h+1}-(u-1)m_{h+1}\bigr]\delta\\
&=(m_h-u m_{h+1})\delta,
\end{aligned}
\]
where the equality follows from the fact that
$M_i-m_i$ is independent of $i$.
For either sign of $q_j$, applying
\eqref{eq:coeff-down} with $t=u+1\le k$ yields
\[
C_{\mathrm{DC}}+\Delta\Phi
\le\bigl[u+1+m_h-u m_{h+1}\bigr]\delta
\le0.
\]

Summing $C_{\mathrm{DC}}+\Delta\Phi\le0$ over the
elementary steps gives
$\ALG_t+\Delta\Phi_t^{\mathrm{on}}\le0$.
\end{proof}

\begin{theorem}[{\cite{BCN23}}]
\label{thm:tree}
For $k\ge2$, DoubleCoverage is $M_d$-competitive for hard
$k$-taxi on a finite rooted weighted tree of combinatorial
depth $d\ge1$, with pickups and destinations at its vertices.
Here $M_d$ is given by~\eqref{eq:tree-factor}
and is $O(k^d)$ for fixed $d$.
\end{theorem}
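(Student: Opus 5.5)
The plan is to use $\Phi=\Phi_{\mathrm{tree}}$ from~\eqref{eq:tree-potential} together with the criterion~\eqref{eq:potential-criterion} with $\mu=M_d$. Nonnegativity of $\Phi$ has already been noted, and its initial value depends only on the tree and the initial configurations. The summation argument in Section~\ref{subsec:potential-analysis} then gives $\ALG\le M_d\OPT+\Phi_0$. So it suffices to fix request $t$ with pickup $r$ and destination $s$, and to prove $\ALG_t+\Delta\Phi_t\le M_d\OPT_t$. As in the proof of Theorem~\ref{thm:hst}, we split the request into the offline movement phase, the online movement phase and the free relocation phase.

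\emph{Offline movement.} The online configuration is fixed, and the offline algorithm moves a total charged distance $\OPT_t$ to bring a taxi to $r$. By~\eqref{eq:tree-motion-bound}, $\Delta\Phi_t^{\mathrm{off}}\le M_d\OPT_t$. One point needs checking here: only empty movement is charged, so the offline phase must not hide any free movement. This holds because our phase ordering puts all free movement into the relocation phase. Any other offline repositioning of idle taxis is also charged, so the charged distance is exactly the total offline movement in this phase.

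\emph{Online movement.} Now $Y$ is fixed and contains a taxi at $r$. Lemma~\ref{lem:tree-online} gives $\ALG_t+\Delta\Phi_t^{\mathrm{on}}\le0$, and Lemma~\ref{lem:step-decomposition} guarantees the finite decomposition into elementary steps that this lemma uses.

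\emph{Free relocation.} Both serving taxis move from $r$ to $s$. This changes $|X\cap T_z|$ and $|Y\cap T_z|$ by the same amount $\mathbf{1}_{s\in T_z}-\mathbf{1}_{r\in T_z}$, so every $D(z)$, and hence $\Phi$, is unchanged, giving $\Delta\Phi_t^{\mathrm{rel}}=0$. This step relies on the offline serving taxi sitting at $r$ when the relocation begins, which is true by the ordering of the phases.

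Adding the three bounds gives the per-request inequality. The closed form of $M_d$ comes from~\eqref{eq:tree-factor}. For fixed $d$, $M_d=\frac{2k(k-1)^d-3k+2}{k-2}=O(k^d)$ when $k\ge3$, and $M_d=4d-1$ when $k=2$.

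There is no real obstacle at this stage, since the heavy lifting is in Lemmas~\ref{lem:tree-online} and~\ref{lem:tree-coeff}. The one delicate point is justifying~\eqref{eq:tree-motion-bound} for offline paths that run through vertices and across edges of different depths. There the argument is that each unit of movement changes the integrand by at most $\max\{M_i,|m_i|\}\le M_d$ per unit length crossed, and we sum this over the path.
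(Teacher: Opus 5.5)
Your proposal is correct and follows essentially the same proof as the paper. Both use $\Phi_{\mathrm{tree}}$ with the per-request criterion and split each request into three phases: offline movement bounded by \eqref{eq:tree-motion-bound}, online movement handled by Lemma~\ref{lem:tree-online}, and free relocation, which leaves every cut difference and hence $\Phi_{\mathrm{tree}}$ unchanged.
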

\begin{proof}
It suffices to show
$\ALG_t+\Delta\Phi_t\le M_d\OPT_t$
for every request $t$.
Fix request $t$, with pickup $r$ and destination $s$,
and consider its three phases.

\emph{Offline movement.}
The online configuration remains fixed while the
offline solution serves the pickup at cost $\OPT_t$.
By~\eqref{eq:tree-motion-bound},
$\Delta\Phi_t^{\mathrm{off}}\le M_d\OPT_t$.

\emph{Online movement.}
The offline configuration contains a taxi at $r$
and remains fixed while \DC{} serves the pickup.
By Lemma~\ref{lem:tree-online},
$\ALG_t+\Delta\Phi_t^{\mathrm{on}}\le0$.

\emph{Free relocation.}
Both configurations relocate their serving taxi
from $r$ to $s$, preserving every cut difference
and hence $\Phi$.
Thus $\Delta\Phi_t^{\mathrm{rel}}=0$.

Adding the bounds for the three phases, we obtain
\[
\begin{aligned}
\ALG_t+\Delta\Phi_t
&=\Delta\Phi_t^{\mathrm{off}}
  +\bigl(\ALG_t+\Delta\Phi_t^{\mathrm{on}}\bigr)
  +\Delta\Phi_t^{\mathrm{rel}}\\
&\le M_d\OPT_t,
\end{aligned}
\]
as desired.
\end{proof}

\paragraph*{Acknowledgments.}
The author thanks Allan Borodin for his encouragement, insight, guidance, and helpful comments on this manuscript.

\paragraph*{Generative AI disclosure.}
Generative AI tools were used extensively during the development and preparation of this manuscript. The initial proof idea was suggested by ChatGPT 5.6 Sol while the author was seeking assistance with a different problem. ChatGPT 5.6 Sol and Astra were subsequently used to explore, develop, and formalize the argument and to assist with drafting.

In particular, the initial argument was phrased in terms of ``sufficiently small steps.'' At the author's request, the AI helped make this precise through the elementary-step decomposition and the continuity argument at step endpoints. The author subsequently checked and verified the complete argument, including all proofs, calculations, and cited claims.

Claude Opus 4.7 and ChatGPT Astra also assisted with generating the figures. The author takes full responsibility for every claim, proof, and citation in this manuscript.

\bibliographystyle{plainurl}
\bibliography{references}
\newpage
\appendix
\section{Proof of the Elementary-Step Decomposition}
\label{app:events}

\begin{proof}[Proof of Lemma~\ref{lem:step-decomposition}]
Fix a request point $r$ and the offline configuration $Y$ during the online movement phase.
If an online server already occupies $r$, the phase has zero duration and there are no elementary steps.
Otherwise, each server follows its initial path towards $r$ without reversing direction.
Movement towards $r$ may be upward or downward relative to the root.

Suppose that a server at $x$ is strictly obstructed by another server at $y\in(x,r)$.
Until service ends, the obstructing server stays in $[y,r)$, whether it moves or stops.
Any departure from $x$ would first pass through $[x,y)$ while still strictly obstructed,
which the movement rule forbids. Thus a strictly obstructed server remains stationary
for the rest of the phase.

If several servers occupy an unobstructed location, one representative
moves and the others remain stationary. As it leaves, the others become
strictly obstructed. The same convention applies when moving
servers meet: if their common location is unobstructed, one continues and the others stop;
otherwise, all stop. Thus a stationary server never resumes movement, and each initially moving
server stops at most once, so there are at most \(k\) stopping events.

Each moving server follows a simple path towards \(r\), so it reaches each
original vertex at most once and there are at most \(k|V|\) such events.
This includes a server reaching a vertex and thereby entering another
server's path to \(r\), obstructing it before they meet. A moving server also cannot newly meet a server that was already stationary immediately beforehand, since that server would already obstruct its path to $r$. Hence every new meeting before service ends involves at least two moving servers and causes at least one of them to stop permanently. Since each server stops at most once, there are only finitely many such meetings.

Between consecutive times at which a moving server reaches
a vertex or two servers meet, each moving server stays in
the interior of one edge. During such an interval, no server
passes another or moves onto another server's path to $r$.
Thus the obstruction relation and the set of moving servers
remain unchanged.

Let \(m(t):=\min_i\dist(x_i(t),r)\) be the distance from the request to the
closest online server at time \(t\). Before service ends, a closest server is
unobstructed, since any server lying strictly between it and \(r\) would be
even closer to \(r\). Thus at least one server at distance \(m(t)\) moves
toward \(r\) at unit speed, while no server can decrease its distance to \(r\)
faster than unit speed. Hence between consecutive events \(m(t)\) decreases
exactly at unit speed. Since server positions, and therefore \(m(t)\), are
continuous at event times, \(m(t)=m(0)-t\) until a server reaches \(r\).
Thus the movement phase ends after time \(L=m(0)<\infty\).

Partition $[0,L]$ at every time when a moving server reaches a vertex or when the moving set or obstruction relation changes. Further subdivide whenever a moving online server passes a location occupied by an offline server. Since each online server moves continuously towards $r$ without reversing direction, it can pass each fixed offline location at most once, so only finitely many additional subdivision times are introduced. Including the endpoints, write the resulting distinct times as $0=t_0<t_1<\cdots<t_N=L$. If several changes occur at the same time, that time is included only once.

Throughout the interior of each interval $[t_{q-1},t_q]$,
the set of moving servers and the obstruction relation
remain unchanged. Each moving server stays in the interior
of one original edge, moves either toward or away from
$\rho$ throughout the interval, and passes no offline server.
Thus $[t_{q-1},t_q]$ is an elementary step of duration
$\delta=t_q-t_{q-1}>0$.
\end{proof}

\section{Proof of the Tree Transportation Identity}
\label{app:transport-proof}

Throughout this appendix, points are counted with multiplicity. When several
points occupy the same location, we distinguish them by their labels.

\begin{proof}[Proof of Lemma~\ref{lem:transport}]
Write \(X=\{x_1,\ldots,x_k\}\) and \(Y=\{y_1,\ldots,y_k\}\), and recall that
\[
M(X,Y)=\min_{\sigma\in S_k}\sum_{i=1}^k \dist(x_i,y_{\sigma(i)}),
\qquad
D(z)=|X\cap T_z|-|Y\cap T_z|.
\]

\noindent\emph{Lower bound.}
Fix a matching \(\sigma\in S_k\). Since the length of a path is the integral
of its indicator function,
\[
\sum_{i=1}^k \dist(x_i,y_{\sigma(i)})
=
\int_T \sum_{i=1}^k
\mathbf{1}_{\{z\in[x_i,y_{\sigma(i)}]\}}\,dz.
\]
For a point \(z\notin X\cup Y\) in an edge interior, let \(a(z)\) be the
number of matched
pairs \((x_i,y_{\sigma(i)})\) with \(x_i\in T_z\) and
\(y_{\sigma(i)}\notin T_z\), and let \(b(z)\) be the number with
\(x_i\notin T_z\) and \(y_{\sigma(i)}\in T_z\). Pairs whose endpoints are
both in \(T_z\) contribute equally to \(|X\cap T_z|\) and \(|Y\cap T_z|\),
while pairs whose endpoints are both outside contribute to neither count.
Thus \(D(z)=a(z)-b(z)\). A matching path crosses \(z\) exactly when its
endpoints lie on opposite sides of the cut, so
\[
\sum_{i=1}^k \mathbf{1}_{\{z\in[x_i,y_{\sigma(i)}]\}}
=a(z)+b(z)\ge |a(z)-b(z)|=|D(z)|.
\]
Integrating and minimizing over \(\sigma\) gives
\(M(X,Y)\ge \int_T |D(z)|\,dz\).

\medskip
\noindent\emph{Upper bound.}

Before constructing the matching, subdivide every edge at each point of
\(X\cup Y\). This produces a finite rooted weighted tree with the same
metric. It leaves every matching cost and \(\int_T |D(z)|\,dz\) unchanged,
since subdivision only splits an edge integral into integrals over the
resulting subedges. Hence we may assume that every point of \(X\cup Y\) is a vertex, with edges and vertices henceforth referring to this subdivided tree. The finitely many vertices have measure zero, so the values of the integrand there do not affect the integral.

We now construct a matching attaining the lower bound. For a vertex \(v\),
let \(S_v\) be the subtree rooted at \(v\), including \(v\), and set
\(D_v:=|X\cap S_v|-|Y\cap S_v|\).

This matching is constructed recursively from the leaves toward the root.
At each leaf, match as many points of \(X\) there with points of \(Y\) there
as possible. Now let \(v\) be a vertex whose child subtrees have already
been treated. Among the points in those child subtrees that are still
unmatched, together with any points located at \(v\), match points of \(X\)
with points of \(Y\) for as long as both are available. Continue in this
way until the root is reached.

It is easy to see that, after the construction has been carried out inside
\(S_v\), exactly \(|D_v|\) points of \(S_v\) remain unmatched: they belong
to \(X\) if \(D_v>0\), and to \(Y\) if \(D_v<0\). Indeed, every match removes
one point from each configuration, so the difference between the numbers of
unmatched \(X\) and \(Y\) points in \(S_v\) remains
\(|X\cap S_v|-|Y\cap S_v|=D_v\), and we continue matching until one side is
exhausted. At the root, \(D_\rho=|X|-|Y|=0\), so every point is matched and
the construction gives a perfect matching \(\sigma^*\).

Now let \(e\) be the edge joining a non-root vertex \(v\) to its parent.
After the construction has been carried out inside \(S_v\), exactly
\(|D_v|\) points of \(S_v\) remain unmatched, all from the same configuration.
Each of these points must therefore eventually be matched to a point outside
\(S_v\). Conversely, every point that has already been matched inside \(S_v\)
is paired with another point of \(S_v\). Hence exactly \(|D_v|\) matched pairs
have one endpoint in \(S_v\) and the other outside \(S_v\). The unique path
joining the two endpoints of such a pair crosses \(e\), while a path joining
two points of \(S_v\) does not. Therefore exactly \(|D_v|\) matching paths
cross \(e\).

Let \(w_e\) be the length of \(e\). For every \(z\) in the interior of
\(e\), \(T_z\) and \(S_v\) contain the same points of each configuration,
since the edge interior contains no point of \(X\cup Y\). Thus \(D(z)=D_v\)
throughout the edge interior, so \(e\) contributes \(w_e|D_v|\) to
\(\int_T |D(z)|\,dz\). Exactly \(|D_v|\) matching paths cross \(e\), giving
the same contribution to the matching cost. Summing over all edges gives
\[
\sum_{i=1}^k \dist(x_i,y_{\sigma^*(i)})
=
\int_T |D(z)|\,dz.
\]
Thus \(M(X,Y)\le \int_T |D(z)|\,dz\). Together with the lower bound,
this proves the identity.
\end{proof}

\section{Continuity at Step Endpoints}
\label{app:continuity}

Throughout this appendix, the offline configuration $Y$ is fixed while
\DC{} moves toward the current request. At the boundaries between
elementary steps, the set of moving servers, the obstruction relation,
or the height ordering of the taxis may change. We show that none of
these changes causes a jump in the potentials used above. Thus the estimates proved in the interiors of elementary
steps remain valid when the endpoints are included.

\begin{lemma}
The functions $M(X,Y)$, $P(X)$, and $\Theta(X)$ are continuous as the
online servers move continuously. The same is true of $\Psi(X)$ in the
HST setting and of $\Phi_{\mathrm{tree}}(X,Y)$ for general weighted
trees. Hence all the potentials used in Sections~4--6 are continuous
throughout the online movement.
\end{lemma}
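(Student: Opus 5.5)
The plan is to show that each function is Lipschitz in the online configuration. The natural metric on configurations is the sum of displacements $\sum_i \dist(x_i,x_i')$ for a fixed labelling of the servers. Because servers move continuously in time, each function is then continuous along the online trajectory. Lipschitz continuity is the right notion here: it is insensitive to servers meeting, reaching vertices, or exchanging ranks, which are exactly the events at step endpoints.

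\emph{Matching and pairwise terms.} For $M$, the paper already shows that moving servers a total distance $C$ changes $M$ by at most $C$, using a fixed optimal matching and the triangle inequality. For $\Theta$, the triangle inequality gives $|\dist(\rho,x)-\dist(\rho,x')|\le\dist(x,x')$ for each server, so $\Theta$ is $1$-Lipschitz. For $P$, each pairwise distance is $2$-Lipschitz in its two arguments, so $P$ is $(k-1)$-Lipschitz in the labelled positions.

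\emph{Tree potential.} For $\Phi_{\mathrm{tree}}$, I will use the argument behind \eqref{eq:tree-motion-bound}. Moving one server from $x$ to $x'$ changes $D(z)$ only for $z$ on $[x,x']$, and there by exactly $\pm1$. Since $\varphi_i$ has slopes bounded by $M_d$, the integrand changes by at most $M_d$ on a set of measure $\dist(x,x')$. Hence $|\Delta\Phi_{\mathrm{tree}}|\le M_d\sum_i\dist(x_i,x_i')$.

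\emph{Rank potential.} $\Psi$ is the main obstacle, because the ranks are defined through a sorting of the heights, and the sorting changes when heights cross. I will write $\Psi=\sum_{i}g_i(h_{(i)})$, where $g_i(h)=\sum_\ell c_{i,\ell}\chi_\ell(h)$ and $h_{(0)}\le\cdots\le h_{(k-1)}$ is the sorted height vector.
\begin{itemize}
\item Each $\chi_\ell$ is continuous and piecewise linear with slope in $\{0,1\}$, so each $g_i$ is continuous.
\item The map from an unsorted height vector to its order statistics is continuous; in fact it is $1$-Lipschitz in the sup norm.
\item Because this map is symmetric, the arbitrary tie-breaking in the rank order does not affect the value of $\Psi$.
\item Heights $h(x)=W-\dist(\rho,x)$ are $1$-Lipschitz in position.
\end{itemize}
Composing these continuous maps shows that $\Psi$ is continuous, and the combined potentials are continuous as finite linear combinations.

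Finally, the step bounds must extend from the interior of a step to its endpoints. Each inequality $C_{\mathrm{DC}}+\Delta\Phi\le0$, computed on a closed subinterval $[a+\epsilon,b-\epsilon]$ of a step, passes to the limit $\epsilon\to0$ by this continuity.
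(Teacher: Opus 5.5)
Your proposal is correct and follows essentially the same route as the paper. The paper likewise proves Lipschitz bounds in the labelled displacement $R(s,t)=\sum_i\dist(x_i(s),x_i(t))$ for $M$, $P$, $\Theta$ and $\Phi_{\mathrm{tree}}$, handles $\Psi$ by combining the sup-norm $1$-Lipschitz property of the sorted heights with the $1$-Lipschitz $\chi_\ell$, and uses the same limiting argument at step endpoints. Two small differences: the paper proves the order-statistics bound rather than asserting it, and your phrase ``$2$-Lipschitz'' for $\dist(x_a,x_b)$ is loose. What yields your correct constant $k-1$ is that $\dist(x_a,x_b)$ is $1$-Lipschitz with respect to $\dist(x_a,x_a')+\dist(x_b,x_b')$.
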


\begin{proof}
\emph{The basic terms $M$, $P$, and $\Theta$.} Give the servers fixed labels and write their positions at time
$t$ as $x_1(t),\ldots,x_k(t)$. For two times $s$ and $t$, write
$R(s,t):=\sum_{i=1}^k \dist(x_i(s),x_i(t))$ for the total distance needed
to move each labelled server from its position at time $s$ to its position
at time $t$.

The triangle inequality gives
$|M(X(t),Y)-M(X(s),Y)|\le R(s,t)$,
$|P(X(t))-P(X(s))|\le (k-1)R(s,t)$, and
$|\Theta(X(t))-\Theta(X(s))|\le R(s,t)$.

For the first inequality, fix an optimal matching at time $s$ and use the same matching at time $t$; its cost changes by at most $R(s,t)$. Reversing the roles of $s$ and $t$ gives the absolute value bound. The other two inequalities follow directly from the triangle inequality. Thus $M$, $P$, and $\Theta$ are Lipschitz in the server positions, and hence continuous.

\emph{The HST potential $\Psi$.}
In the HST setting, consider two times $s$ and $t$, and let $\varepsilon$ be the maximum distance that any online taxi moves between them, that is,
\[
\varepsilon:=\max_{1\le j\le k}\dist(x_j(s),x_j(t)).
\]
Since $h(x)=W-\dist(\rho,x)$, the height of each taxi changes by at most $\varepsilon$.

Let $h_0(t)\le\cdots\le h_{k-1}(t)$ denote the taxi heights in increasing order. Then $|h_i(t)-h_i(s)|\le\varepsilon$ for every $0\le i<k$. Indeed, at time $s$ there are at least $i+1$ taxis of height at most $h_i(s)$. At time $t$, those same taxis have height at most $h_i(s)+\varepsilon$, so $h_i(t)\le h_i(s)+\varepsilon$. Reversing the roles of $s$ and $t$ gives the opposite inequality.

Recall that $\chi_\ell(h)=h-\alpha_\ell$ for
$\alpha_\ell\le h\le\alpha_{\ell+1}$ and is constant for
$h\le\alpha_\ell$ and for $h\ge\alpha_{\ell+1}$.
Thus $|\chi_\ell(v)-\chi_\ell(w)|\le |v-w|$.
It follows that each term $c_{i,\ell}\chi_\ell(h_i)$ changes by at most $c_{i,\ell}\varepsilon$, and therefore
\[
|\Psi(X(t))-\Psi(X(s))|
\le
\varepsilon\sum_{i=0}^{k-1}\sum_{\ell=0}^{d-1}c_{i,\ell}.
\]
This estimate also applies when two taxis exchange ranks,
since their heights agree at that time and sorting the
heights introduces no discontinuity. Likewise, crossing a layer boundary causes no jump because each $\chi_\ell$ is continuous at its endpoints. Thus $\Psi$ is Lipschitz in the taxi positions, and hence continuous.

\emph{The general-tree potential $\Phi_{\mathrm{tree}}$.}
Recall from~\eqref{eq:tree-motion-bound} that if one configuration moves a total distance $C$ while the other is fixed, then
$|\Delta\Phi_{\mathrm{tree}}|\le M_d C$.
Consider the paths joining the labelled online taxis' positions at times $s$ and $t$. The total distance moved is $R(s,t)$, so
\[
|\Phi_{\mathrm{tree}}(X(t),Y)-\Phi_{\mathrm{tree}}(X(s),Y)|
\le M_dR(s,t).
\]
Thus $\Phi_{\mathrm{tree}}$ is Lipschitz in the taxi positions, and hence continuous. The bound also applies when a taxi crosses between edges of different depths, since~\eqref{eq:tree-motion-bound} applies along the entire path.

The full potentials in Sections~\ref{sec:server} and~\ref{sec:hst} are fixed linear combinations of the continuous quantities above, and are therefore continuous as well.
\end{proof}

We now use continuity to pass from the interior of an elementary step to the whole step. Let $[a,b]$ be an elementary step, rank-refined in the HST setting, and suppose that $F(t)-F(s)\le \gamma(t-s)$ for every $a<s<t<b$, where $F$ is one of the continuous quantities considered above, or a fixed linear combination of them. Letting $s\downarrow a$ and $t\uparrow b$, continuity of $F$ gives $F(b)-F(a)\le \gamma(b-a)$. The same argument applies when the relation is an equality.

For the online movement inequalities, let $m=|U|+|B|$ be the number of movers in the step interior. Between times $s$ and $t$, each mover travels distance $t-s$, so the movement cost is $m(t-s)$. Hence an interior estimate $m(t-s)+F(t)-F(s)\le0$ extends, by the same limiting argument, to $m(b-a)+F(b)-F(a)\le0$. Changes occurring exactly at $a$ or $b$ take no time and therefore contribute no additional movement cost.

\end{document}